\documentclass[a4paper,fleqn]{cas-sc}

\usepackage[numbers]{natbib}
\usepackage{inputenc}
\usepackage{amsmath,amssymb,physics,cancel,graphicx, subcaption, placeins}
\usepackage{nccmath}
\usepackage[english]{babel}
\usepackage{amsthm}
\graphicspath{ {./images/} }
\usepackage{atbegshi}
\usepackage{enumitem}
\usepackage{soul, hanging}
\usepackage{ulem}
\usepackage[toc,page]{appendix}

\newtheorem{lemma}{Lemma}

\def\tsc#1{\csdef{#1}{\textsc{\lowercase{#1}}\xspace}}
\tsc{WGM}
\tsc{QE}

\begin{document}
\let\WriteBookmarks\relax
\def\floatpagepagefraction{1}
\def\textpagefraction{.001}

\shorttitle{Market Power and Distributed Solar Integration in Microgrids under Limited Regulation}    

\shortauthors{Elsa Bou Gebrael, Majd Olleik, Sebastian Zwickl-Bernhard}  

\title [mode = title]{Market Power and Distributed Solar Integration in Microgrids under Limited Regulation}  



%

\author[1]{Elsa {Bou Gebrael}}[orcid=0009-0005-2289-4417]
\author[1]{Majd Olleik}[orcid=0009-0006-4082-4102]

\cormark[1]





\affiliation[1]{organization={Department of Industrial Engineering and Management, American University of Beirut},
            city={Beirut},
            country={Lebanon}}

\author[2, 3]{Sebastian Zwickl-Bernhard}[orcid=0000-0002-8599-6278]





\affiliation[2]{organization={Energy Economics Group (EEG), Technische Universität Wien},
            city={Wien},
            country={Austria}}
            
\affiliation[3]{organization={Department of Industrial Economics and Technology Management, The Norwegian University of Science and Technology},
            city={Trondheim},
            country={Norway}}

\cortext[1]{Corresponding author}



\begin{abstract}
Decentralized electricity systems increasingly emerge where centralized grids fail to provide reliable supply. In such settings, privately operated neighborhood microgrids, often based on diesel generators, exhibit significant market power, limited regulatory oversight, and high environmental externalities. In parallel, households increasingly deploy off-grid solar photovoltaic (PV) systems to gain control over electricity supply. However, these systems suffer from curtailed excess generation during peak solar hours and unreliable access at other times. While prior studies have optimized microgrids in low-reliability grid contexts from a techno-economic perspective, they largely neglect the market power exerted by monopolistic private generators. This paper addresses this gap by developing a bi-level game-theoretic model that enables household-generated electricity to be fed into the microgrid while explicitly accounting for the market power of a neighborhood diesel generator company (DGC). The regulator sets price and feed-in-tariff caps to maximize household economic surplus (HES), while the DGC acts as a profit-maximizing agent controlling access and supply, through diesel or solar PV-based generation, batteries, and the integration of household PV-owner excess generation. The model is illustrated using high-resolution empirical data from Lebanon. Results show that: (i) price and feed-in-tariff caps substantially increase HES and consistently induce significant household PV feed-in to the microgrid; (ii) higher DGC budgets or greater PV-owner penetration lead to pronounced gains in HES; and (iii) the renewable energy share reaches 60\% under base conditions and approaches 100\% at sufficiently high budgets or PV-owner penetration levels, compared to 0\% under the status quo.
\end{abstract}



\begin{keywords}
Microgrids \sep renewable energy \sep market power \sep bi-level model
\end{keywords}

\maketitle

\section{Introduction}\label{sec 1}

Reliable electricity supply remains a major challenge in many regions where centralized grids fail to meet demand due to infrastructure degradation, financial constraints, or institutional weaknesses \cite{berha2026unreliable}. In such contexts, electricity provision increasingly relies on decentralized solutions developed by consumers and private operators. These include household-level solar home systems (SHS) and neighborhood diesel generators, which often operate formally or informally in parallel with the national grid \cite{lawrie2025friend, IEA2025}. While these arrangements improve access to electricity, they can lead to lower energy efficiency, higher generation costs, and significant environmental externalities \cite{lawrie2025friend}. 

Neighborhood diesel generators have become a widespread fallback solution in regions where national grids fail to provide reliable electricity supply. Lebanon, Iraq, and Nigeria are notable examples, where geopolitical instability, sanctions, and systemic corruption have resulted in prolonged main grid outages of up to 20 hours per day \cite{AFP2021Lebanon, al2021local, obar2022navigating}. Similar reliance on diesel generators is observed in Yemen, Afghanistan, Kenya, and other Sub-Saharan countries \cite{lawrie2025friend}. When serving a neighborhood or community, these diesel generator companies (DGCs) create microgrids that operate in parallel to the national grid \cite{lawrie2025friend, al2021local}. DGCs often benefit from a natural monopoly, assuming both generator and distributor roles, which allows them to control access and maximize profits \cite{lawrie2025friend, al2021local, zwickl2025market}. Electricity prices in such microgrids reflect both the high generation cost of diesel and the exercise of market power, while environmental externalities remain largely unaccounted for. In the absence of a reliable national grid, governments have little leverage to regulate these microgrids, which effectively compensate for broader systemic shortcomings \cite{abi2018energy}.

To counter the monopoly power of DGCs and enhance household autonomy in electricity generation, many consumers are increasingly investing in off-grid rooftop photovoltaic (PV) systems \cite{IEA2025}.  
Although rooftop PV enables cleaner electricity generation and is widely regarded as sustainable ``last-mile'' solution \cite{long2022enhancing}, its intermittent nature and ad hoc deployment often result in excess generation during periods of high solar availability and insufficient supply at other times introducing inefficiencies \cite{rad2023excess, fuchs2023swarm}. Battery storage can reduce PV curtailment; however, eliminating it entirely would require substantially oversized batteries that remain underutilized for much of their lifetime, making such solutions prohibitively expensive \cite{wang2023site}.

By allowing households to trade excess PV generation with the DGC rather than curtailing it, hybrid microgrids can increase renewable energy utilization while reducing reliance on diesel generation \cite{IEA2025, sheridan2023swarm}. However, in many developing countries these microgrids are privately owned and operated by DGCs that possess substantial market power and face only limited regulatory oversight \cite{zebra2021review, valta2018comparison}. Consequently, designing policies that increase renewable energy utilization while preserving reliable electricity supply requires accounting for the strategic interaction between regulators and dominant microgrid operators.

Therefore, the purpose of this paper is to answer the following research question: 
\textit{How can a regulator with limited oversight, and unable to challenge the profitability of an entrenched DGC, design policies that promote cleaner microgrids? Specifically, such policies should increase the utilization of household PV assets, reduce reliance on diesel generation, limit unmet demand, and improve electricity affordability.}

To answer this research question, we develop a bi-level game-theoretical model in which the regulator, constrained by the entrenched profitability of the DGC, acts first, followed by the DGC’s response. 
At the first level, the regulator maximizes household economic surplus by setting an electricity price cap and a feed-in tariff floor while ensuring that the DGC maintains its current profitability. At the second level, the DGC jointly optimizes its capacity expansion decisions (in diesel generation, solar PV, and battery storage) and system operation, including electricity trading with household PV owners, to maximize profit.

The rest of this paper proceeds as follows: Section \ref{sec:related_works} reviews the literature on microgrids and FiT policies in grid-constrained settings; it also defines the research gap and highlights the contributions. Section \ref{sec:methods} details the proposed bi-level game theoretical framework along with the solution approach. Section \ref{sec:case study} presents our case study and the data used. Section \ref{sec: results} discusses the results while Section \ref{sec:conclusion} summarizes the main findings.

\section{Related works} \label{sec:related_works}

Research on electricity systems in low-reliability grid contexts has primarily evolved along two complementary streams. The first focuses on the techno-economic evaluation of swarm grids and hybrid microgrids, where investments emerge through the organic addition of generation capacity by individual participants rather than through centralized planning \cite{sheridan2023swarm}. In these systems, energy management is often treated as a secondary consideration, unlike conventionally planned microgrids that are designed to efficiently satisfy forecasted demand. The second stream examines feed-in tariff (FiT) policies as mechanisms for promoting rooftop PV adoption and renewable energy trading with the national grid. However, the application of such policies to privately operated microgrids has received comparatively little attention.

\subsection{Swarm grids and microgrids in low-reliability grid contexts}
With microgrids and swarm electrification emerging as a natural consumer response in crisis settings, regions with unreliable national grids have observed the development of renewable energy communities, with household ownership and participation being a key success factor in implementation \cite{kirchhoff2016developing, kirchhoff2019key}.
The existing literature has thoroughly discussed the techno-economic aspects of swarm grids and microgrids, focusing on the reduction of rooftop PV systems generation inefficiencies and the optimal design and pricing under multiple objectives \cite{sheridan2023swarm, chedid2020optimal, kharrich2020design, sawwas2021pool, svolba2025renewable, manas2023critical}. For example, in the Philippines, a case study shows improvements in microgrid efficiency and blackout reduction when connecting the SHS of the different households together \cite{prevedello2021benefits}. Similarly, a Yemeni case study compares three setups, including standalone private systems, a swarm grid, and a formally organized microgrid, showing that when PV-owners are allowed to participate in either the swarm grid or the microgrid, electricity costs were reduced and demand realization was improved \cite{hoffmann2019simulating}. The results of a study from Madagascar further show that the trading of excess energy between individual PV-owners (prosumers) increased consumer economic surplus through a more efficient market-clearing approach relying on a peer-to-peer algorithm, leading to lower electricity prices \cite{bertram2024local}. Applications of game theory in community grids highlight as well the benefit of peer-to-peer trading, citing better utilization of renewable energy, reduced carbon emissions, and increased social welfare, under the assumption of a neutral grid operator \cite{luo2022distributed, fernandez2021bi}.

In practice, however, the prevalence of diesel-based microgrids in fragile contexts restricts PV-owner participation \cite{lawrie2025friend}. In such microgrids, regulatory oversight is limited, and the DGC owns the grid, obstructing the adequate policy design and fair tariff setting needed for effective renewable integration \cite{zebra2021review, valta2018comparison}. In fragile states, authorities often lack both the capacity and the mandate to regulate or formalize existing microgrids \cite{sesan2024exploring, bhattacharyya2018mini}, giving rise to market conditions that differ substantially from the competitive settings typically assumed in the literature.

\subsection{FiT policies}
The general consensus around the role of feed-in-tariff policies at the consumer level is that they incentivize investments in renewable sources \cite{choi2018prices}, and enhance the efficiency of already-installed assets \cite{al2021utilization}. Developed countries have led in establishing policies and mechanisms to manage prosumer energy exchange with the grid, including at the microgrid level, with particular consideration for greenhouse emission reduction \cite{feleafel2025feed}. Extensive frameworks deriving the optimal price of electricity fed have been developed. For example, a sequential model-based optimization is proposed for flexible FiT design in microgrids, based on historical and projected data, while keeping in line with Australian regulations \cite{habib2025smart}.

When implementing FiT policies, lower income countries aim to tackle additional concerns, such as electricity access, availability and affordability. The connection of rooftop PV systems to the grid often corresponds to the cost-optimal configuration, and PV-owners are generally willing to participate in FiT programs \cite{suberi2024rooftop}. It is therefore not surprising that multiple developing countries have set national FiT policies, with considerations for the availability of renewable energy technologies \cite{bhattacharyya2013regulate}. Hence, specific frameworks for the financial modeling of FiTs have been developed in accordance with local market complexities, and country-specific applications, such as in Kenya and Malawi \cite{moner2016adaptation}. However, such national policies are not adequate when national grids are intermittent, as feed-in can only happen when the grid is energized.
Accounting for this limitation, a decision analysis framework has been proposed to plan hybrid renewable energy systems under uncertain grid interconnection in Lebanon, explicitly considering the unreliability of the country's national utility in the discussion of future FiTs \cite{olleik2026planning}.

Tanzania is one of the only countries where considerations for local challenges and microgrid-level FiT policies overlap. It explores how FiTs can be adapted specifically for remote mini-grids, determining a tariff reflecting rural operating conditions, and competing with diesel generator economics \cite{bhattacharyya2013regulate, moner2016adaptation}.

\subsection{Research gap and contributions}

Overall, the existing literature demonstrates the potential of hybrid microgrids to improve renewable energy utilization, enhance electricity reliability, and reduce electricity costs. Likewise, a growing body of research has investigated mechanisms for utilizing excess generation from household-owned rooftop PV systems, primarily through national FiT schemes. However, such mechanisms have received little attention in privately operated microgrids, whose emergence is itself a consequence of limited regulatory capacity \cite{sesan2024exploring}. Moreover, most studies on hybrid PV-diesel microgrids and local energy trading assume competitive or otherwise simplified market structures, overlooking the presence of DGCs and their strategic interactions with regulators and consumers. Consequently, the role of market power and institutional constraints in shaping renewable energy integration remains largely unexplored \cite{zwickl2025market, svolba2025renewable, sesan2024exploring}.

To address this gap, this paper makes the following contributions:

\begin{itemize}

\item We develop a bi-level game-theoretic framework that explicitly models the strategic interaction between a weak regulator and a DGC operating a crisis-driven microgrid. Unlike existing approaches that assume centralized planning or competitive market structures, the proposed framework captures the institutional reality in which the regulator can only influence market outcomes indirectly through an electricity price cap and a feed-in tariff floor while preserving the DGC's current profitability.

\item We jointly model regulatory policy design, generation capacity expansion, and microgrid operation within a unified optimization framework. In particular, the DGC simultaneously determines investments in diesel generation, solar PV, and battery storage, together with electricity prices, feed-in tariffs, and operational decisions, allowing the assessment of trade-offs between DGC-owned renewable capacity and electricity purchases from household PV owners. While the case study focuses on diesel, solar PV, and battery storage, the proposed framework can readily accommodate other technologies, including wind power, enabling its application to a broader range of microgrid settings.

\item We calibrate the proposed framework using field data from a Lebanese crisis-driven microgrid, including high-resolution household demand profiles collected through dedicated logging devices. The resulting case study provides regulatory and planning insights into the design of electricity price caps and feed-in tariffs under weak regulatory oversight, with implications for Lebanon and other countries relying on crisis-driven microgrids.

\end{itemize}

\section{Modeling framework}\label{sec:methods}

The proposed modeling framework considers that the regulatory entity, acting as a leader in a bi-level game, is interested in achieving the following four goals: (i) increasing the utilization of the existing household PV assets, (ii) reducing reliance on diesel-based generation, (iii) limiting the unmet demand, and (iv) reducing the electricity price in the DGC-controlled microgrid. This should be done without reducing the profits of the DGC. Achieving these desired regulatory goals contributes to increasing the household economic surplus composed of the combined value of (i) the household satisfied demand, and (ii) the household PV electricity fed into the microgrid. The regulatory entity is able to set an upper bound on the price of electricity sold to the households $P^{max}$, and a lower bound on the price of electricity sold to the DGC by the household PV-owners $FiT^{min}$ (Figure \ref{fig:BLG diagram}).

\begin{figure}
    \centering
    \includegraphics[width=0.9\linewidth]{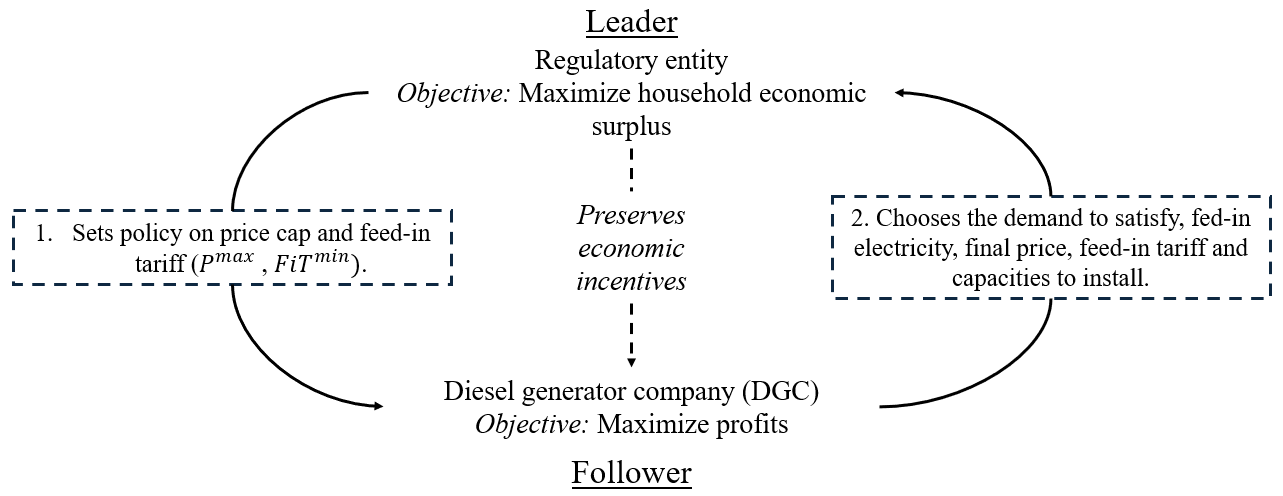}
    \caption{Diagram of the bi-level game}
    \label{fig:BLG diagram}
\end{figure}

The DGC acts as a follower interested in maximizing its profits in response to the policy adopted by the regulatory entity composed of the tuple $(P^{max}, FiT^{min})$. Traditionally, drawing on practical experiences in developing countries with diesel-based microgrids, the DGC controls the generation schedule (and accordingly the outage schedule) and the household access to the microgrid \cite{lawrie2025friend} (Figure \ref{fig:MG_statusquo}). It sells the electricity at a price $P$ of its choice such that $P \leq P^{max}$. In our proposed framework, the DGC is offered the additional flexibility of purchasing electricity from households with PV systems at a price $FiT$ of its choice as long as $FiT \geq FiT^{min}$. The DGC is also free to choose to invest in its own PV and battery storage assets if such an investment is economically attractive (Figure \ref{fig:MG_new}). Formally speaking, in this Stackelberg game, the leader sets a policy affecting the feasible set of the follower, and evaluates its own payoff by anticipating the follower's reaction.

\begin{figure}
    \centering
    \begin{subfigure}[b]{0.48\textwidth}
        \centering
        \includegraphics[width=\textwidth]{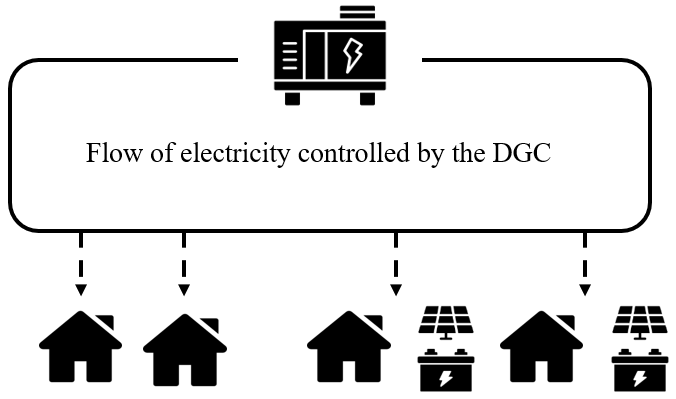}
        \caption{Status-quo microgrid}
        \label{fig:MG_statusquo}
    \end{subfigure}
    \hfill
    \begin{subfigure}[b]{0.48\textwidth}
        \centering
        \includegraphics[width=\textwidth]{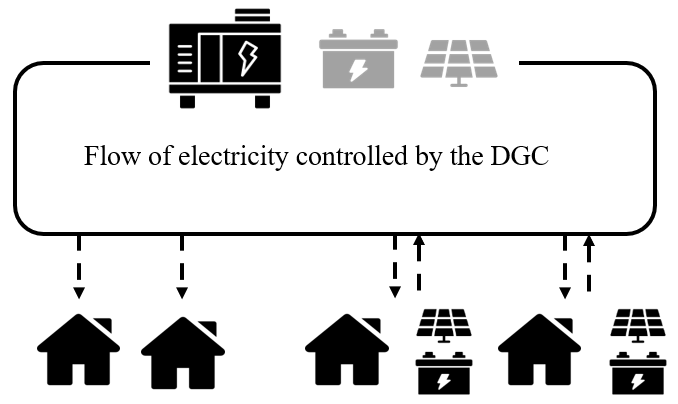}
        \caption{Proposed microgrid}
        \label{fig:MG_new}
    \end{subfigure}
    \begin{subfigure}[b]{0.75\textwidth}
        \centering
        \includegraphics[width=\textwidth]{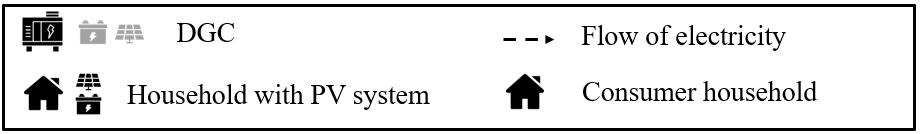}
    \end{subfigure}
    \caption{Representative diagram of the microgrid before and after change}
    \label{fig:microgrids}
\end{figure}

Accordingly, the full game theoretical model can be written as:
\begin{align*}
    &\max_{P^{max}, FiT^{min}} \quad \text{household economic surplus} \tag{leader obj} \label{L obj}\\
    &\mbox{\textit{subject to:}}\\
    &\quad\quad\quad\quad\text{ensuring: } \text{profits}^{DGC} \geq \text{base profits}^{DGC}\tag{leader constraint}\label{RE cstrt}\\
    &\mbox{\textit{and:}}\\
    &\quad\quad\quad\quad\max_\mathcal{S}\quad\text{profits}^{DGC}
    \tag{follower obj}\label{F obj}\\
    &\quad\quad\quad\quad\mbox{\textit{subject to:}}\\
    &\quad\quad\quad\quad\quad\quad\quad\quad\mbox{economic and technical constraints}\tag{follower constraints}\label{DGC cstrt}
\end{align*}

The first level decisions are composed of the tuple ($P^{max}, FiT^{min})$. Table \ref{tab:DV} presents the decision variables used in the second-level model, which constitute set $\mathcal{S}$. The parameters and relevant set names are shown in Appendix \ref{nomenclature}, in Tables \ref{tab:In} and \ref{tab:sets} respectively. To preserve the tractability of the model, we employ representative days, each weighed by $\omega_d$, characterizing the entire year. For all variables and parameters, the indices $i,g,y,d,h$ refer to the household type (non PV-owner: 0, PV-owner: 1), the generation technology, the year, the representative day, and the hour, respectively. Moreover, all decision variables are nonnegative, and represented by Latin letters, while parameters are represented by Greek ones.

The two levels of the game are further detailed in Sections \ref{sec: FL} and \ref{sec: SL}.
\begin{table}[]
    \centering
    \small
    \caption{Second-level (DGC) decision variables}
    \begin{tabular}{l p{300pt} l}
        \hline
        Symbol & Variable name & Unit\\
        \hline
         $A_{g,y}$ & Added DGC capacity & kW\\
         $B^+_{y,d,h}$ & Battery charge & kW\\
         $B^-_{y,d,h}$ & Battery discharge & kW\\
         $b_{k,y,d,h}$ & Binary variable enforcing the $k^{th}$ constraint on the heat rate curve & binary \\
         $C_{g,y}$ & Capacity installed & kW\\
         $D_{g,y,d,h}$ & Dispatched power  & kW \\
         $Fi_{i,y,d,h}$ & Fed-in electricity from household PV-owner excess & kW\\
         $FiT$ & Feed-in-tariff & USD/kWh\\
         $P$ & Price of electricity & USD/kWh\\
         $Q_y$ & Demand served by the microgrid & kWh\\
         $Ret_{g,y}$ & Retired capacity & kW\\
         $SoC_{y,d,h}$ & State of charge at the end of every hour $h$ & kWh\\
         $SoC^0_{y,d}$ & State of charge at the beginning of the day & kWh\\
         $U_{y,d,h}$ & Unmet demand across all households & kWh\\
         $R_{y,d,h}$ & Diesel consumption as a function of the diesel generator heat rate and the dispatch & L/kWh\\ 
        $S_g$ & Salvage value of the installed capacity at the end of the planning horizon & USD\\
         $TR_y$ & Total revenues earned by the DGC  & USD\\
         $TCC_y$ & Total capital costs incurred by the DGC & USD\\
         $TOVC_y$ & Total operation variable costs incurred by the DGC & USD\\
         $TOFC_y$ & Total operation fixed costs incurred by the DGC& USD\\
         $TUDC_y$ & Total unmet demand costs incurred by the DGC& USD\\
         \hline
        \end{tabular}
        
        \label{tab:DV}
\end{table} 

\subsection{First-level} \label{sec: FL}
The regulatory entity's payoff, equivalent to the discounted household economic surplus (HES), is given by Equation \ref{RE obj}: 
\begin{equation}
    \max_{P^{max}, FiT^{min}} \sum_{y\in\mathcal{Y}} \left[Q_y \left(\xi - P\right) + \sum_{d\in\mathcal{D}} \omega_d \sum_{h\in\mathcal{H}} Fi_{i,y,d,h} \times FiT\right] \left((1+\gamma^{RE})^{-y}\right) \label{RE obj}
\end{equation}
It is made up of two components: (i) the value of the demand met, which can be computed as the product of the met demand $Q_y$ (detailed in Equation \ref{DS}) and the difference between the value of lost load $\xi$ and the price $P$, and (ii) the value from feeding in household-generated electricity to the microgrid, computed as the product of the fed-in electricity $Fi_{y,d,h}$ and the feed-in-tariff $FiT$. The symbol $\gamma^{RE}$ denotes the discount rate of the regulatory entity.

The total demand in the microgrid is expressed as $- \sum_{i\in\mathcal{I}} \min\left( 0,  \Omega_{i,y} \times \sigma_{i,y,d,h}\right)$ where the symbols $\Omega_{i=0, y}$ and $\Omega_{i=1, y}$ respectively represent the number of non PV-owner households and PV-owner households in the microgrid. The symbol $\sigma_{i,y,d,h}$ denotes the excess electricity supply for household category $i$. A positive $\sigma_{i,y,d,h}$ means that household generation potential from PV exceeds its instantaneous demand, and that PV-owners can feed at most $\sigma_{i,y,d,h}$ into the microgrid. A negative $\sigma_{i,y,d,h}$ indicates a net demand.

The served demand $Q_y$ is defined in Equation \ref{DS}. It is derived from the total demand in the microgrid, reduced by the unserved demand $U_{y,d,h}$ determined by the DGC:
\begin{equation} \label{DS}
   Q_y = \sum_{d\in\mathcal{D}} \omega_d \sum_{h\in\mathcal{H}} \left( - \sum_{i\in\mathcal{I}} \min\left( 0,  \Omega_{i,y} \times \sigma_{i,y,d,h}\right) - U_{y,d,h}\right)\quad\forall y
\end{equation}
The PV-owner's generation excess is the difference between the household PV total generation and the household demand, whereas the non-PV-owner's excess is simply the negative demand, as defined below:
\begin{align*}
    \sigma_{i=0,y,d,h}&=-\mu_{i=0, y, d, h}\quad \forall y,d,h,i\\
    \sigma_{i=1,y,d,h} &= \phi_{y,d,h}\times \theta^{PV}_{i=1} - \mu_{i,y,d,h} \quad \forall y,d,h,i
\end{align*}
where $\mu_{i, y, d, h}$ is the household demand, $\phi_{y,d,h}$ is the solar capacity factor, and $\theta^{PV}_{i=1}$ is the average size of the household PV-owner PV system.

The decision variables of the regulator are only bounded by 0 on the lower end. Any nonnegative feed-in-tariff is acceptable to PV-owners, as their excess generation potential would have been otherwise wasted. However, $(P^{max}, FiT^{min})$ must keep the DGC economically incentivized, \textit{i.e.,} the maximized net present value (NPV) of profits of the DGC under any policy ($NPV^{DGC}$) should be greater than or equal to the one under the current market conditions ($NPV_0^{DGC}$) (Equation \ref{RE cstrt}), establishing the constraint for the DGC's participation in the game.
The DGC's profit maximizing model, presented in Section \ref{sec: SL}, is implicit to this constraint.

\subsection{Second-level}\label{sec: SL}
The below model describing the microgrid under DGC control is inspired by the available literature around determining the optimal generation asset sizing in a microgrid under a profit-maximization objective \cite{villa2022oversizing, karimi2024multi}. The DGC desires to maximize its NPV. We assume that, if multiple sets of decisions lead to the same maximized NPV, the DGC prefers the one with the lowest level of unmet demand (Equation \ref{eq: DGC obj}). This preference is illustrated by a very small unmet demand penalty weight $\varepsilon$ introduced to the DGC's objective function:
\begin{equation} \label{eq: DGC obj}
    \max_{\mathcal{S}} \quad NPV^{DGC} - \varepsilon\left(\sum_y \sum_d\omega_d\sum_h U_{y, d ,h}\cdot(1+\gamma^{DGC})^{-y}\right)
\end{equation}

The NPV formulation is detailed in Equation \ref{obj fun}:
\begin{equation} \label{obj fun}
    NPV^{DGC} = \sum_{y\in\mathcal{Y}} \left[\left(TR_y-TCC_y- TOVC_y-TOFC_y\right)(1+\gamma^{DGC})^{-y}\right] + \sum_{g\in\mathcal{G}} S_g (1+\gamma^{DGC})^{-\Upsilon}
\end{equation}

For each year $y$, $TR_y$ is the total revenues, $TCC_y$ the total capital costs, $TOVC_y$ the total operation variable costs, $TOFC_y$ the total operation fixed costs, and $S_g$ the salvage value of the technology $g$. Each of these terms are detailed in Equations \ref{eq:TCC} through \ref{eq:TR} $\forall y$, and \ref{eq:SV} $\forall g$, where the summations indexed by $y$, $d$, $h$, $g$, $i$ are over the entire sets $\mathcal{Y}$, $\mathcal{D}$, $\mathcal{H}$, $\mathcal{G}$, $\mathcal{I}$, unless otherwise specified.
\begin{align}
    TCC_y&= \sum_{g} \left(\lambda^C_{g,y} \cdot A_{g,y}\right)\quad\forall y\label{eq:TCC}\\
    TOVC_y &= \sum_d \omega_d \sum_h \left( \lambda^{OV}_{g=B} \cdot B^+_{y,d,h} + \sum_{g\in\mathcal{G}g} \left(\lambda^{OV}_g \cdot D_{g,y,d,h}\right) + R_{y,d,h}\cdot \pi + \sum_i Fi_{i,y,d,h} \cdot FiT\right)\quad\forall y\label{eq:TOVC}\\
    TOFC_y &= \sum_{g} \left( \lambda^{OF}_g \cdot C_{g,y} \right)\quad\forall y\label{eq:TOFC}\\
    TR_y&= P\sum_d \omega_d \sum_h \left(B^-_{y,d,h} - B^+_{y,d,h} + \sum_{g \in \mathcal{G}g}D_{g,y,d,h} + \sum_i Fi_{i,y,d,h}\right)\quad\forall y \label{eq:TR}\\
    S_g &= \sum_{y = \Upsilon - \nu_g} ^ {\Upsilon-1} 
    \left[ A_{g,y} \cdot \lambda^C_{g, \Upsilon} \cdot \left(1- \frac{\Upsilon - y}{\nu_g} \right) \right] \quad\forall g\label{eq:SV}
\end{align}

The change from the status quo to the proposed microgrid is represented by the binary input $\beta$, with $\beta = 1$ indicating, from the perspective of the DGC, the possibility of feeding in electricity from households and installing PV and battery capacities. Equations \ref{max PV} and \ref{max_bat} limit the installation of PV and batteries respectively to 0 in the current situation, or to a large enough number $M$ otherwise. Equation \ref{fed-in} sets an upper bound on the PV-owners' feed-in when $\beta = 1$: 

\begin{align} 
    C_{g=PV, y} &\leq \beta \times M \quad\forall y \label{max PV}\\
    C_{g=B, y} &\leq \beta \times M \quad\forall y \label{max_bat}\\
    Fi_{i,y,d,h} &\leq \max\left(0, \Omega_{i,y} \times \sigma_{i,y,d,h}\right) \times \beta \quad \forall y,d,h,i \label{fed-in}
\end{align}

The DGC's objective function is subject to the following constraints:
\begin{itemize}
    \item The supply-demand balance constraint:\\
    \begin{equation} \label{sup-dem}
    U_{y,d,h} + B^-_{y,d,h} + \sum_{g\in\mathcal{G}g} D_{g,y,d,h} + \sum_i Fi_{i,y,d,h}  = B^+_{y,d,h} - \sum_i \min\left( 0,  \Omega_{i,y} \times \sigma_{i,y,d,h}\right)\quad\forall y,d,h
    \end{equation}
    Equation \ref{sup-dem} ensures that the dispatched energy $D_{g,y,d,h}$, the discharging of batteries $B^-_{y,d,h}$ and the fed-in electricity $Fi$ are enough to satisfy the household demand $ - \sum_i \min\left( 0,  \Omega_{i,y} \times \sigma_{i,y,d,h}\right)$ and the charging of batteries $B^+_{y,d,h}$, while allowing unmet demand $U_{y, d, h}$.
    Furthermore, at any hour, the unmet demand must not exceed the total demand (Equation \ref{min UD h}):
    \begin{equation}\label{min UD h}
        U_{y, d, h} \leq - \sum_i \min\left(0,  \Omega_{i,y} \times \sigma_{i,y,d,h}\right)
        \quad \quad\forall y, d, h
    \end{equation}

    \item The budget constraint:\\
    While the DGC can choose which capacities to install, its decision is bounded by a total budget $\Pi$. This budget is available at year $y=0$, and all subsequent discounted cash flows due to capacity installation expenses should amount to it, as expressed in Equation \ref{eq: budget}.
    \begin{equation} \label{eq: budget}
        \sum_{g\in\mathcal{G}g}\sum_{y} A_{g,y}\times \lambda^C_{g, y}\times(1+\gamma^{DGC})^{-y} \leq \Pi
    \end{equation}

    \item The regulator policy constraints:\\
    The electricity price and the FiT set by the DGC are constrained by the regulator's decisions on $P^{max}$ and $FiT^{max}$ (Equations \ref{p bound} and \ref{fit bound}).
        \begin{align}
            P&\leq P^{max} \label{p bound}\\
            FiT&\geq FiT^{min}\label{fit bound}
        \end{align}

\item The technologies' capacity and retirement constraints:\\
Appendix \ref{gen tech} details the constraints on installation and retirement of capacity for the different technologies.

\item The dispatch constraints:\\
The dispatch (or in the case of batteries, charge and discharge) for each technology is limited by its corresponding installed capacity, and a corresponding capacity factor when applicable. The equations for these constraints are in appendix \ref{gen disp}. The diesel generator should also obey a set of constraints pertaining to its heat rate. Traditionally, the heat rate function is a U-shaped curve showing the fuel consumed per generated kWh against the utilization rate of a diesel generator. To preserve the linearity of constraints, we approximate the heat rate curve by a piecewise constant function \cite{wuijts2024effect}. In the following, we model the curve as three piecewise constant functions where the diesel consumption $R_{y,d,h}$ is defined as:
    \begin{align}
        R_{y,d,h} &= \rho_1 \times D_{g=DG, y,d,h}\quad \text{if} \quad D_{g=DG, y,d,h} \leq 0.30 \times C_{g=DG,y} \label{HR.1}\tag{HR.1}\\
        R_{y,d,h} &= \rho_2 \times D_{g=DG, y,d,h}\quad \text{if} \quad\begin{cases} 0.30 \times C_{g=DG,y} < D_{g=DG, y,d,h}\\
        D_{g=DG, y,d,h} \leq 0.60 \times C_{g=DG,y}\end{cases}\label{HR.2}\tag{HR.2}\\
        R_{y,d,h} &= \rho_3 \times D_{g=DG, y,d,h}\quad \text{if} \quad 0.60 \times C_{g=DG,y} < D_{g=DG, y,d,h} \label{HR.3}\tag{HR.3}
    \end{align}
    $\forall y,d,h$. \\

\item The storage technologies constraints:\\
Appendix \ref{storage tech} presents the constraints related to the tracking of the batteries state of charge.
\end{itemize}

\subsection{Solution approach}
The second-level model proposed in this paper is non-linear in the equations defining $TOVC_y$ and $TR_y$ (Equations \ref{eq:TOVC} and \ref{eq:TR} respectively). The non-linearity is caused by the decision variables $P$ and $FiT$. However, lemma \ref{GD} shows that $P$ and $FiT$ are optimized at $P^{max}$ and $FiT^{min}$ respectively, and therefore can be treated as fixed parameters.
\begin{lemma} \label{GD}
Let $NPV^*(P, FiT)$ be the maximized NPV for a given $P, FiT$ under constant demand. Then,
    \begin{equation*}
        NPV^*(P^{max}, FiT^{min}) \geq NPV^*(P, FiT)
        \quad \forall P\leq P^{max},\quad FiT \geq FiT^{min}.
    \end{equation*}
\end{lemma}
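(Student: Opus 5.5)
The approach is a feasible-set argument. Fix any admissible pair $(P, FiT)$ with $P \leq P^{max}$ and $FiT \geq FiT^{min}$, and let $\mathcal{S}^*$ be an optimal solution of the second-level problem at these prices, with value $NPV^*(P,FiT)$. We keep every quantity decision in $\mathcal{S}^*$ unchanged and replace only the prices by $(P^{max}, FiT^{min})$. This includes capacities $A_{g,y}, C_{g,y}, Ret_{g,y}$, dispatch $D_{g,y,d,h}$, battery flows $B^\pm_{y,d,h}$, states of charge, heat-rate binaries $b_{k,y,d,h}$, fuel use $R_{y,d,h}$, feed-in $Fi_{i,y,d,h}$ and unmet demand $U_{y,d,h}$.

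The first step is to check that the modified solution is feasible. The prices $P$ and $FiT$ enter the constraint set only through (\ref{p bound}) and (\ref{fit bound}), and $P^{max}\leq P^{max}$ and $FiT^{min}\geq FiT^{min}$ hold trivially. All remaining constraints are independent of the prices, since the demand data $\Omega_{i,y}\sigma_{i,y,d,h}$ are fixed; this is the ``constant demand'' hypothesis. These constraints are the supply--demand balance (\ref{sup-dem}), the bound (\ref{min UD h}), the budget (\ref{eq: budget}), the bounds (\ref{max PV})--(\ref{fed-in}), the heat-rate constraints (\ref{HR.1})--(\ref{HR.3}), and the appendix capacity, dispatch and storage constraints. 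So the modified point is feasible for the problem at $(P^{max},FiT^{min})$.

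The second step compares objective values term by term in (\ref{obj fun}). The terms $TCC_y$, $TOFC_y$, $S_g$ and the non-FiT parts of $TOVC_y$ do not depend on $P$ or $FiT$, and the $\varepsilon$-penalty on $U$ is likewise unchanged. The revenue is $TR_y = P\cdot E_y$ with
\[
E_y=\sum_d\omega_d\sum_h\Bigl(B^-_{y,d,h}-B^+_{y,d,h}+\sum_{g}D_{g,y,d,h}+\sum_i Fi_{i,y,d,h}\Bigr).
\]
By (\ref{sup-dem}), $E_y$ equals the served demand $Q_y$, and by (\ref{min UD h}) this is nonnegative. Hence $TR_y$ is nondecreasing in $P$, giving $P^{max}E_y\geq P E_y$. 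The feed-in cost $FiT\sum_d\omega_d\sum_h\sum_i Fi_{i,y,d,h}$ has nonnegative multiplier since $Fi\geq 0$, so it is nondecreasing in $FiT$, giving $FiT^{min}\sum Fi\leq FiT\sum Fi$. Multiplying by the positive discount factors $(1+\gamma^{DGC})^{-y}$ and summing over $y$, the objective of the modified point at $(P^{max},FiT^{min})$ is at least $NPV^*(P,FiT)$. Because the optimum at $(P^{max},FiT^{min})$ is at least the value of any feasible point, $NPV^*(P^{max},FiT^{min})\geq NPV^*(P,FiT)$. The same chain applies verbatim to the $\varepsilon$-penalized objective (\ref{eq: DGC obj}).

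The main obstacle is the sign of the revenue multiplier $E_y$. Nonnegativity is not obvious termwise because $B^+$ enters with a minus sign, so it must be derived from the balance equation rather than from the individual variables. This is also exactly where the constant-demand assumption is needed. If demand responded to $P$, the quantity $Q_y$ and the feasible set would both change with the price, and the fixed-solution argument would break. I would state explicitly that under this hypothesis the price enters the model only linearly in the objective and through the cap constraints.
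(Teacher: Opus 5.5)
Your proof is correct and follows the paper's argument. You hold all non-price decisions fixed, combine the supply--demand balance (\ref{sup-dem}) with the unmet-demand bound (\ref{min UD h}) to show that the multiplier of $P$ is nonnegative and that of $FiT$ is $-\sum_i Fi_{i,y,d,h}\leq 0$, and then re-optimize at $(P^{max},FiT^{min})$. The only difference is presentational: the paper treats $P$ and $FiT$ in two separate monotonicity steps via partial derivatives, whereas you move both prices at once and compare objective values directly, also checking feasibility under the cap constraints explicitly.
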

\begin{proof}
    Refer to appendix \ref{lemma proof}.
\end{proof}
Therefore, for a given $(P^{max},FiT^{min})$, the second-level model is transformed into a mixed integer linear program (MILP). The regulator's objective function is evaluated over a discrete set of $(P^{max}, FiT^{min})$ policies through an exhaustive grid search. For each policy tuple, the regulator anticipates the DGC's optimal response and selects the policy that maximizes HES. As the DGC's feasible and bounded MILP is ensured an optimal solution, the proposed numerical solution approach guarantees an overall equilibrium between the discretized first-level and the mixed-integer second-level problems.  Although uniqueness is difficult to establish analytically without additional assumptions \cite{kleinert2021survey}, all sets of parameters utilized in the numerical case study yield each a unique equilibrium.

The DGC's MILP is solved using the Gurobi optimizer version 12.0.1 in Python 3.13.9. The computations are performed using a single compute node with one CPU core and 8 GB of allocated RAM. Solving all the second-level MILPs for a given grid of possible $(P^{max},FiT^{min})$ tuples generally requires less than an hour of runtime.

\section{Case study and data} \label{sec:case study}
We apply the game theoretical framework described in Section \ref{sec:methods} to the case of a real microgrid in Deir Qanoun Ennaher, in Lebanon. 

Lebanon's residential electricity market is characterized by a highly intermittent utility, Electricite du Liban (EDL), and a proliferation of diesel-based microgrids \cite{abi2018energy}. 
Since the 2019 economic crisis, the country has observed an organic increase in distributed renewable energy, namely, household solar PV systems \cite{diab2025crisis}. As of 2023, the installed solar PV capacity is estimated at 1000 MW \cite{LCEC2025SolarPV}.

The microgrid considered in our case study is centered around a 400 kW diesel generator. Around 40\% of the households connected to the microgrid are PV-owners. The number of households, their average PV capacity and characteristics of the diesel generator are summarized in Table \ref{tab:DQEN}. The lifetimes of all considered technologies are shown in Table \ref{tab:tech life}.
To obtain the representative household demand profiles, shown in Figure \ref{fig:rep load prof}, we installed loggers in three households and collected hourly load data from August 2024 to April 2025. The raw logs are available on a public database \cite{dbouk2025electricity}. 
As for solar capacity factor data, we relied on the widely cited website \href{renewables.ninja}{renewables.ninja} \cite{pfenninger2016long, staffell2016using}. Based on clustering techniques widely utilized in electricity planning models to ensure computational tractability \cite{ringkjob2020, maranon2019}, we generate three representative daily load profiles, for summer, winter and spring/fall using the k-means clustering algorithm, as shown in Figures \ref{fig:rep load prof} and \ref{fig:cap fact}. The weights of the three representative days are 0.545, 0.29 and 0.165 for summer, winter and spring/fall respectively.
The inputs and their sources are available in the project's Github repository \footnote{link: \url{https://github.com/molleik/microgrid_MP}}, and summarized in Appendix \ref{app:case study}.

\begin{table}[]
    \centering
    \caption{Data on Deir Qanoun Ennaher}
    \begin{tabular}{l c}
    \hline
      Number of PV-owner households &  250\\
      Number of non-PV-owner households   & 400 \\
      Average installed rooftop PV capacity per PV-owner household (kW)  & 4 \\
      Installed diesel generator capacity (kW) & 400\\
      Remaining lifetime of installed diesel generator (y) & 3\\
      \hline
    \end{tabular}
    \label{tab:DQEN}
\end{table}

\begin{table}[]
    \centering
    \caption{Technology lifetimes \cite{ahmad2020distributed, hatton2024global}}
    \begin{tabular}{l c}
    \hline
    Technology & Lifetime\\\hline
      Solar PV & 20 years\\
      Batteries & 8 years\\
      Diesel generator & 5 years\\
      \hline
    \end{tabular}
    \label{tab:tech life}
\end{table}

\begin{figure}
    \centering
    \begin{subfigure}[b]{0.49\textwidth}
        \centering
        \includegraphics[width=\textwidth]{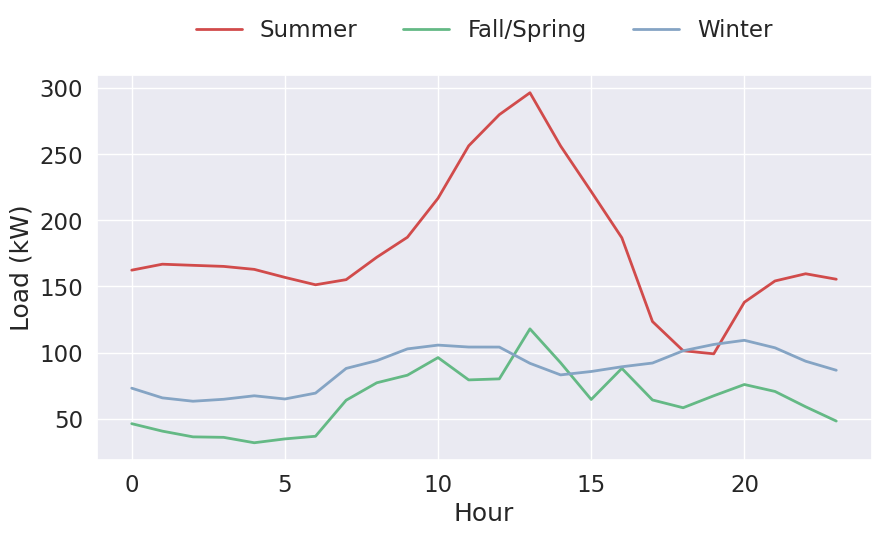}
        \caption{All households owning PV}
        \label{fig:pros}
    \end{subfigure}
    \begin{subfigure}[b]{0.49\textwidth}
        \centering
        \includegraphics[width=\textwidth]{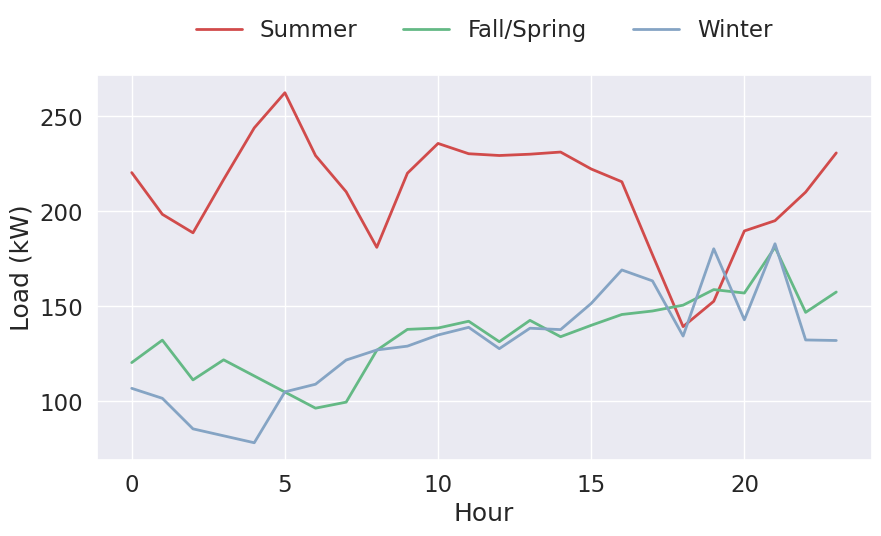}
        \caption{All households not owning PV}
        \label{fig:cons}
    \end{subfigure}
    \caption{Representative microgrid demand profiles}
    \label{fig:rep load prof}
\end{figure}

\begin{figure}
    \centering
    \includegraphics[width=0.5\linewidth]{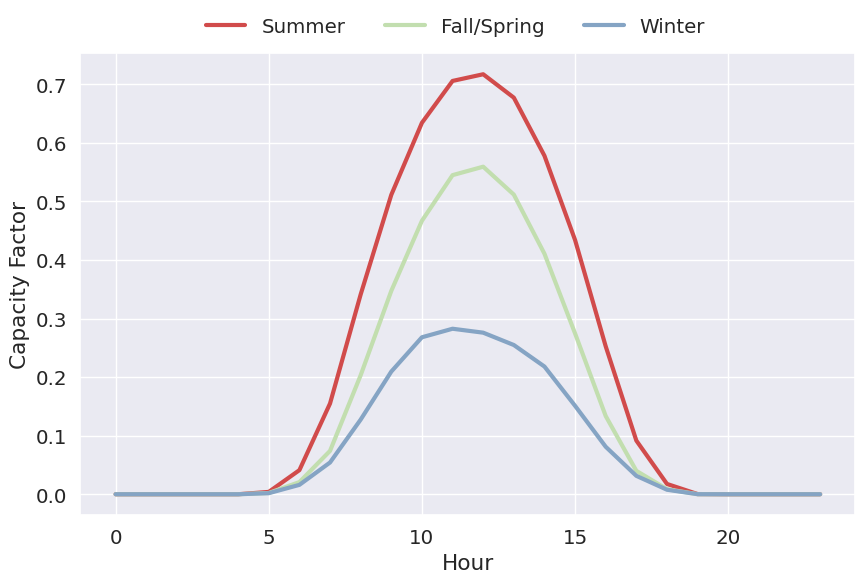}
    \caption{Representative profiles of solar capacity factors}
    \label{fig:cap fact}
\end{figure}

\section{Results and discussion}\label{sec: results}
We first look at the status-quo microgrid, where the DGC only operates a diesel generator without fed-in electricity from household PV-owners' excess, nor installing PV nor batteries. This case constitutes the benchmark against which our proposed model is compared. We compute the current NPV of the DGC's profits ($NPV_0^{DGC}$), which is used as a threshold for the leader's constraint. We also identify the budget needed to support current operations over the planning horizon. This budget, denoted $\Pi_0$, is used as the default budget in the modified model. 

\subsection{Change from status-quo}
Holding the budget at $\Pi_0$, and allowing renewable energy in the microgrid, \textit{i.e.}, fed-in electricity from household PV-owner excess and DGC-owned PV-battery system, the optimal $(P, FiT)$ policy can be found by varying $P$ between 0 and 0.4 USD (the status-quo price) and $FiT$ between 0 USD and $P$. 
Table \ref{tab:benchmark} compares both the status quo and the new model along the following metrics:
\begin{itemize}
    \item The percentage of unmet demand, defined as the ratio of unmet demand to the total demand,
    \item The percentage of wasted excess household PV generation potential, defined as the ratio of wasted excess household PV generation to total household PV generation potential,
    \item The renewable energy penetration, defined as the DGC-owned PV generation and household PV fed-in electricity, as a percentage of the served demand.
\end{itemize}   
The proposed case shows a 18\% improvement in HES. It is driven by a 10\% decrease in electricity price, and the sale of household PV-owner excess generation at 0.12 USD/kWh. The corresponding fed-in electricity results in a decrease of 52.6 percentage points of wasted excess generation potential. Finally, the increased unmet demand in the proposed model is further discussed in Section \ref{sec:ud cons}. These preliminary results already underline the benefits of hybrid microgrids, leading to a more efficient use of the available assets, more affordable electricity, and reduced emissions, despite the existing market power.

\begin{table}[]
    \centering
    \caption{Comparison between status-quo and proposed model, where M refers to millions and k to thousands.}
    \begin{tabular}{ccc} \hline
     & Status quo & Proposed model \\\hline
    $NPV^{DGC}$ (USD) & 2.25 M & 2.46 M \\
    Budget (USD)& 420 k & 420 k \\
    HES (USD)& 6.99 M & 8.26 M \\
    Unmet demand (\%)& 1.9 & 12.2 \\
    Wasted excess PV generation potential (\%)& 58.8 & 6.2 \\
    Price (USD/kWh)& 0.4 & 0.36 \\
    Feed-in-tariff (USD/kWh)& - & 0.12 \\
    Renewable energy penetration (\%) & 0 & 60.1\\\hline
    \end{tabular}
    \label{tab:benchmark}
\end{table}

Figure \ref{fig:FiTs v Prices 400} highlights the feasible policy tuples, for which $NPV^{DGC}\geq NPV_0^{DGC}$. The drawn curve maps, for every $P$, the maximum $FiT$ that keeps the DGC satisfied with profits equal to the ones in the current case. As the $FiT$ on the boundary increases, it becomes high enough for the DGC to drastically reduce fed-in electricity, and rely on owned PV generation instead.
This case is represented by the sharp increase in the maximum feasible $FiT$ after $P = 0.37$ USD/kWh. The hatched area under this portion of the curve shows the policy tuples where the fed-in household PV excess is limited and makes up less than 5\% of the total supply to the microgrid. The remaining highlighted region represents the set of $(P, FiT)$ tuples for which significantly feeding household PV excess into the system is economically viable for the DGC. From an economic standpoint, allowing the DGC to feed-in electricity and build its own PV system provides it with more options and potentially reduces its generation costs. Consequently, the DGC will be able to maintain its current profits while paying a $FiT$ to household PV-owners or reducing the sale price of electricity $P$. The boundary represents the iso-profit curve for the DGC such that each point is a reflection of its willingness to pay for feed-in when the electricity sale price is set at a given $P$. 

For every feasible $(P, FiT)$ tuple, the HES is shown in Figure \ref{fig:HES}. Considering Figures \ref{fig:FiTs v Prices 400} and \ref{fig:HES}, the optimal and near-optimal tuples include significant fed-in electricity from household PV-owner excess. The equilibrium found at $P=0.36$ and $FiT=0.12$ is a direct result of the tradeoff between $FiT$, as effectively decided by the regulator, and the fed-in electricity $Fi$ chosen by the profit-maximizing DGC. As neither metric is driven to an extreme, it can be suggested that the value placed by the DGC on the household PV-owner excess gives the regulator some influence over the operational and investment decisions in the microgrid.

\begin{figure}[]
\centering
    \begin{subfigure}[b]{0.4\textwidth}
        \centering
        \includegraphics[width=\textwidth]{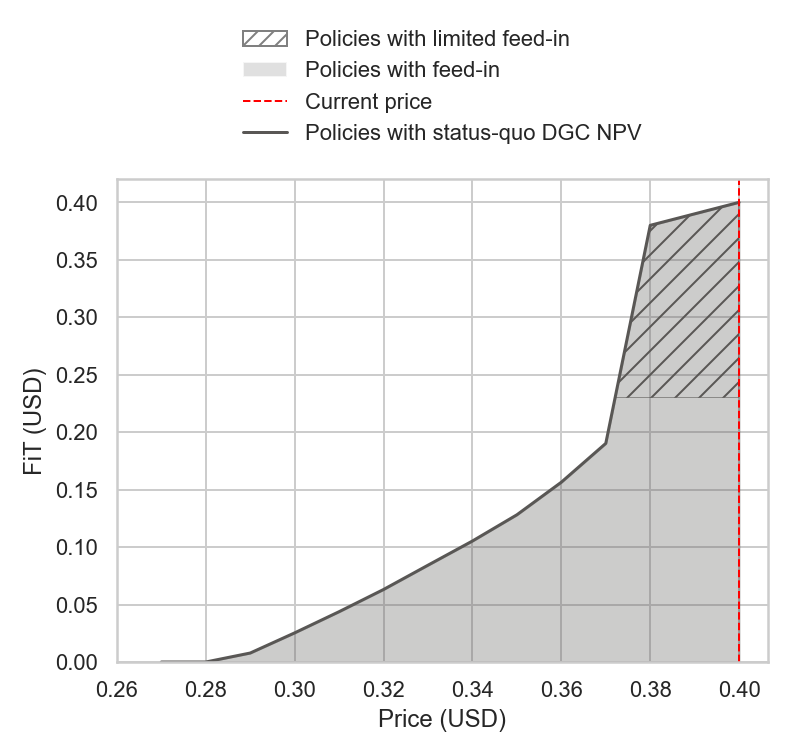}
        \caption{Feasibility region}
        \label{fig:FiTs v Prices 400}
    \end{subfigure}
    \begin{subfigure}[b]{0.4\textwidth}
        \centering
        \includegraphics[width=\textwidth]{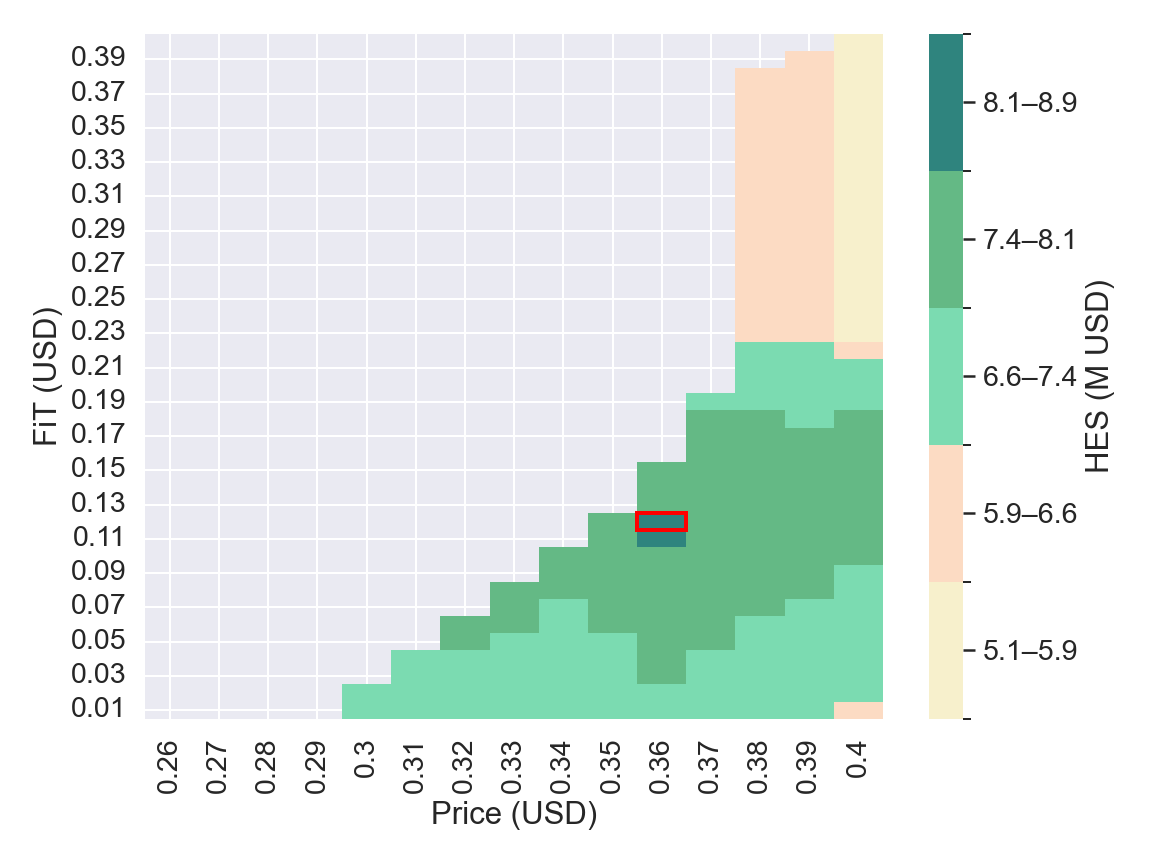}
        \caption{HES (maximum inside red box)}
        \label{fig:HES}
    \end{subfigure}
    \caption{Feasible region and HES for every $(P, FiT)$ tuple in the proposed model}
    \label{fig:proposed model HES}
\end{figure}

The HES resulting from the proposed model is a reflection of the change in the capacity and generation portfolio in the microgrid. Figure \ref{fig:inst_cap_11_35} shows the yearly in-place capacities. During the first three years of operations, the DGC uses the existing diesel capacity, holding off on any additional investment in a PV-battery system until the end of the diesel generator's lifetime. Then, the lower LCOE of PV-battery is more attractive than the relatively expensive diesel generator. With the installation of batteries, the DGC also feeds-in more PV-owner excess (Figure \ref{fig:year_gen_11_35}). The addition of a PV-battery system leads to unmet demand during some hours in winter. The available PV capacity and battery charge cannot meet the demand, and the expansion of any capacity for the purpose of serving these few hours is prohibitively expensive. 

\begin{figure}
\centering
    \begin{subfigure}[b]{0.49\textwidth}
        \centering
        \includegraphics[width=\textwidth]{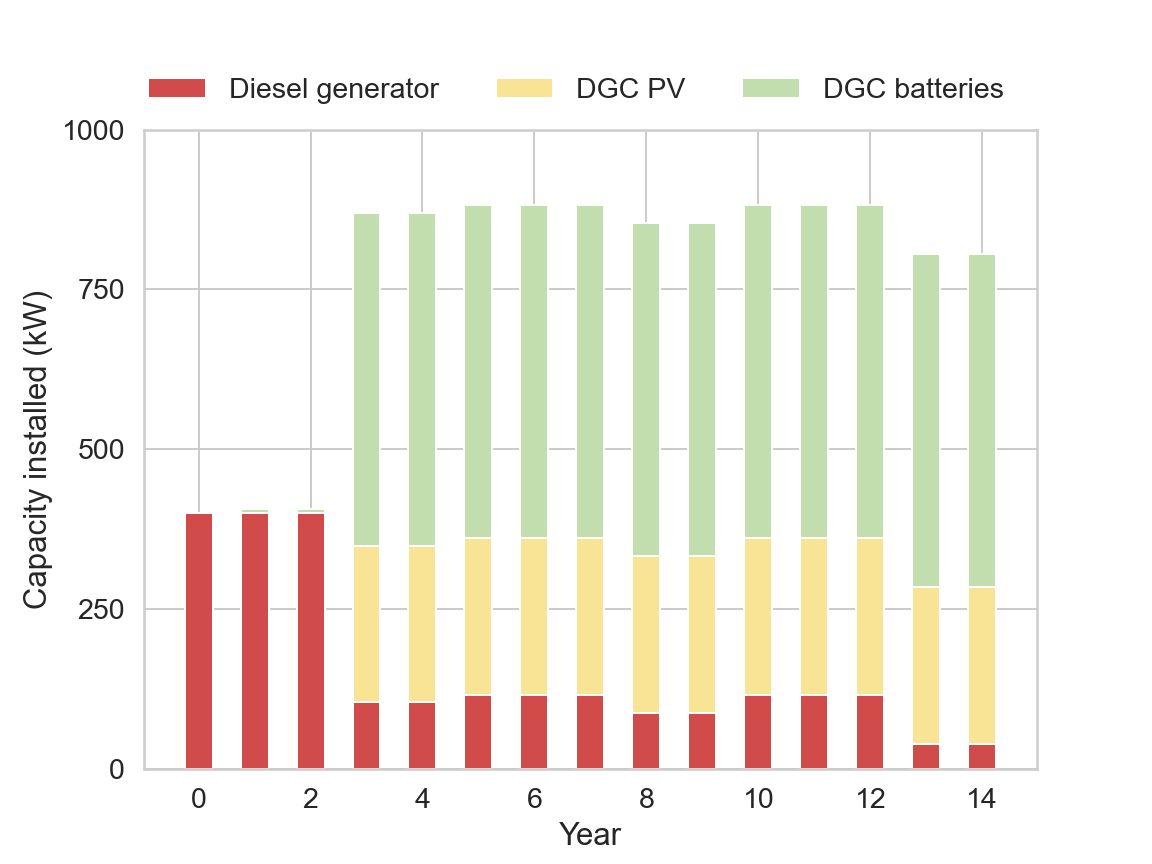}
        \caption{Installed capacities}
        \label{fig:inst_cap_11_35}
    \end{subfigure}
    \begin{subfigure}[b]{0.49\textwidth}
        \centering
        \includegraphics[width=\textwidth]{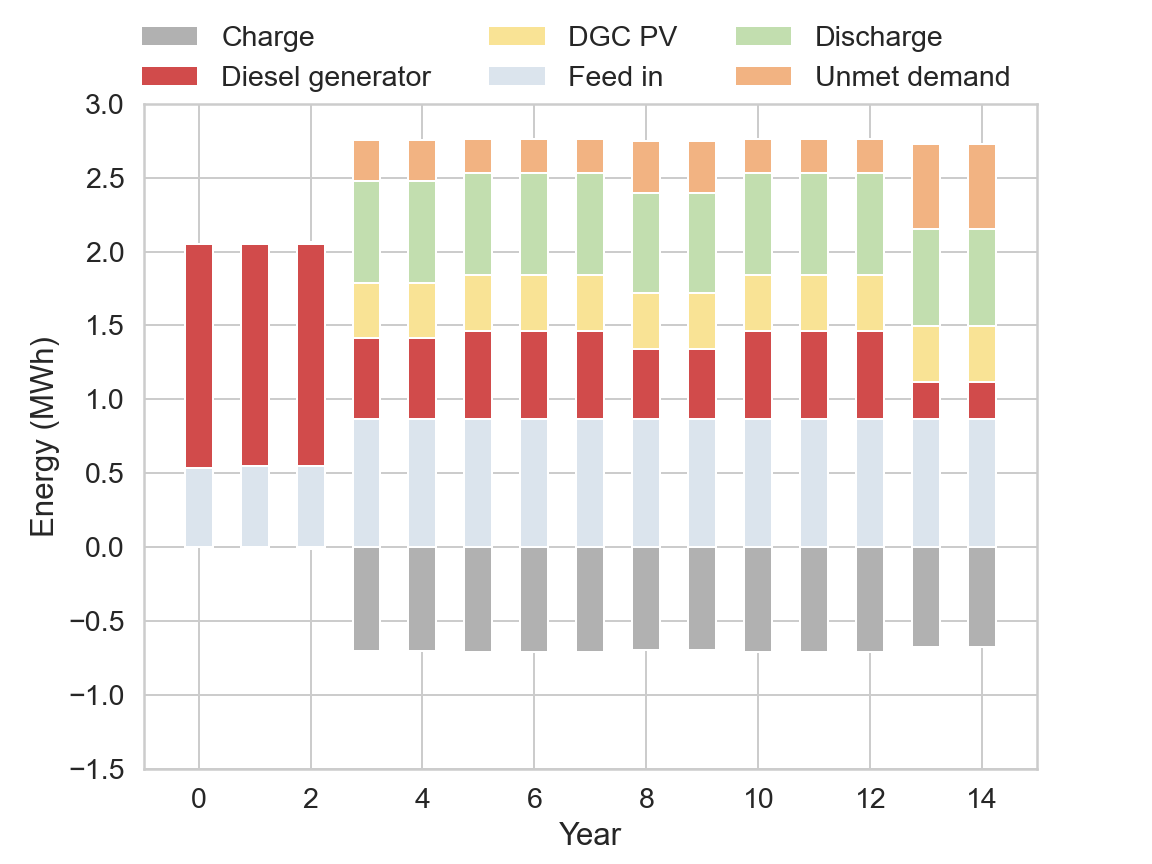}
        \caption{Yearly generation profile}
        \label{fig:year_gen_11_35}
    \end{subfigure}
    \caption{Capacity and generation portfolio in the proposed model}
    \label{fig:proposed model tech}
\end{figure}

\subsection{Sensitivity analysis on the budget constraint} \label{sec:budget sens}
Financing generation capacity expansion is a key concern when determining the optimal size of a microgrid, particularly in lower-income countries. As investment budgets often constitute a barrier to achieving grid reliability, the literature commonly acknowledges this constraint \cite{afful2017power}. As long as the DGC's decisions are bound by the available investment budget, the regulatory entity's consideration for this constraint is essential when evaluating the HES-maximizing ($P$, $FiT$) tuple. 

\begin{figure}
    \centering
    \includegraphics[width=0.5\textwidth]{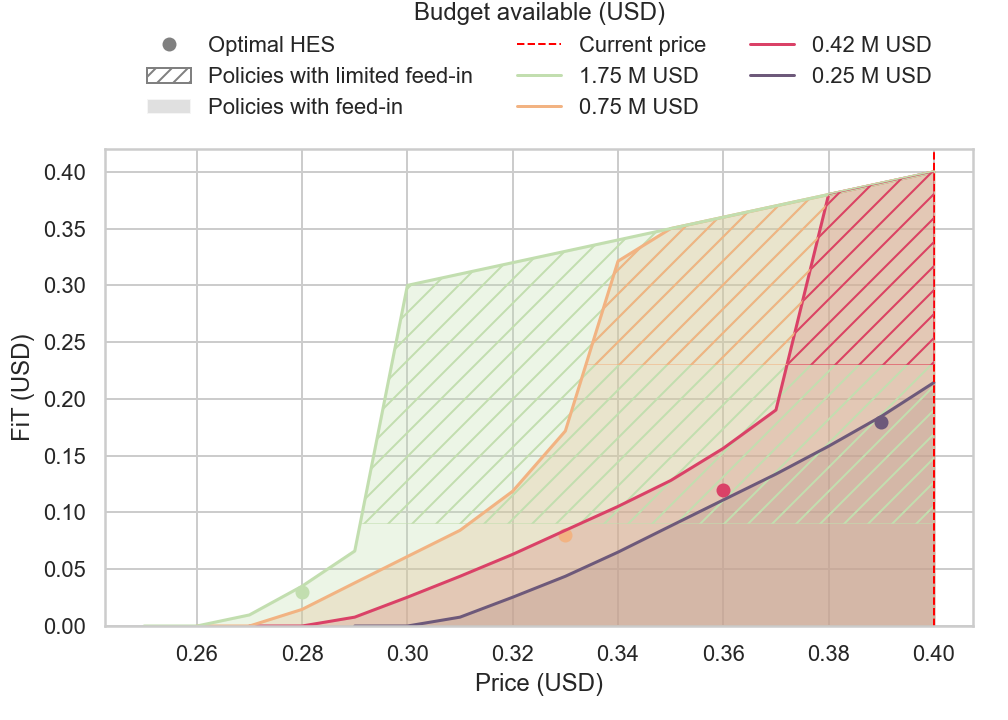}
    \caption{Feasible regions of regulatory entity policies}
    \label{fig:fits v prices}
\end{figure}

Similar to Figure \ref{fig:FiTs v Prices 400}, Figure \ref{fig:fits v prices} shows the feasible $(P, FiT)$ regions under different budgets. For the budget of 1.75 M USD, we find that the budget constraint becomes non-binding for all $(P, FiT)$ tuples, and the DGC significantly reduces its reliance on household PV excess as soon as the $FiT$ exceeds the LCOE of investing in additional PV capacity, corresponding to 0.08 USD/kWh. For lower budgets, the tipping $FiT$ between feeding-in electricity from households and almost fully relying on DGC owned PV system increases to 0.23 for budgets of 0.75 M and 0.42 M, and over 0.40 USD/kWh for a budget of 0.25 M. As it is the case under $\Pi_0$, for each of the considered budgets, the optimal $(P, FiT)$ tuple still allows for significant fed-in electricity from PV-owner excess as this tuple is always located in the non-hatched region.

Figure \ref{fig:household surplus} shows the HES for all feasible $(P, FiT)$ tuples under different budgets. In all cases, the optimal microgrid configuration promotes the efficient utilization of household assets. Moreover, these equilibria imply that, under binding budgets, the household PV-owner excess is a more valuable resource because it does not entail additional investments, and increases electricity availability in the microgrid. With higher budgets, the DGC's set of feasible investment decisions expands, reducing reliance on fed-in electricity. Naturally, from a regulator perspective, when the DGC budget is tight, the $FiT$ component of its policy gains higher importance and allows setting higher values as reflected in the non-hatched areas of Figure \ref{fig:fits v prices}. With higher DGC budgets, the $FiT$ component is less valuable; however, the regulator is able to further reduce the electricity price component $P$ in the microgrid.

\begin{figure}
    \centering
    \begin{subfigure}[b]{0.41\textwidth}
        \centering
        \includegraphics[width=\textwidth]{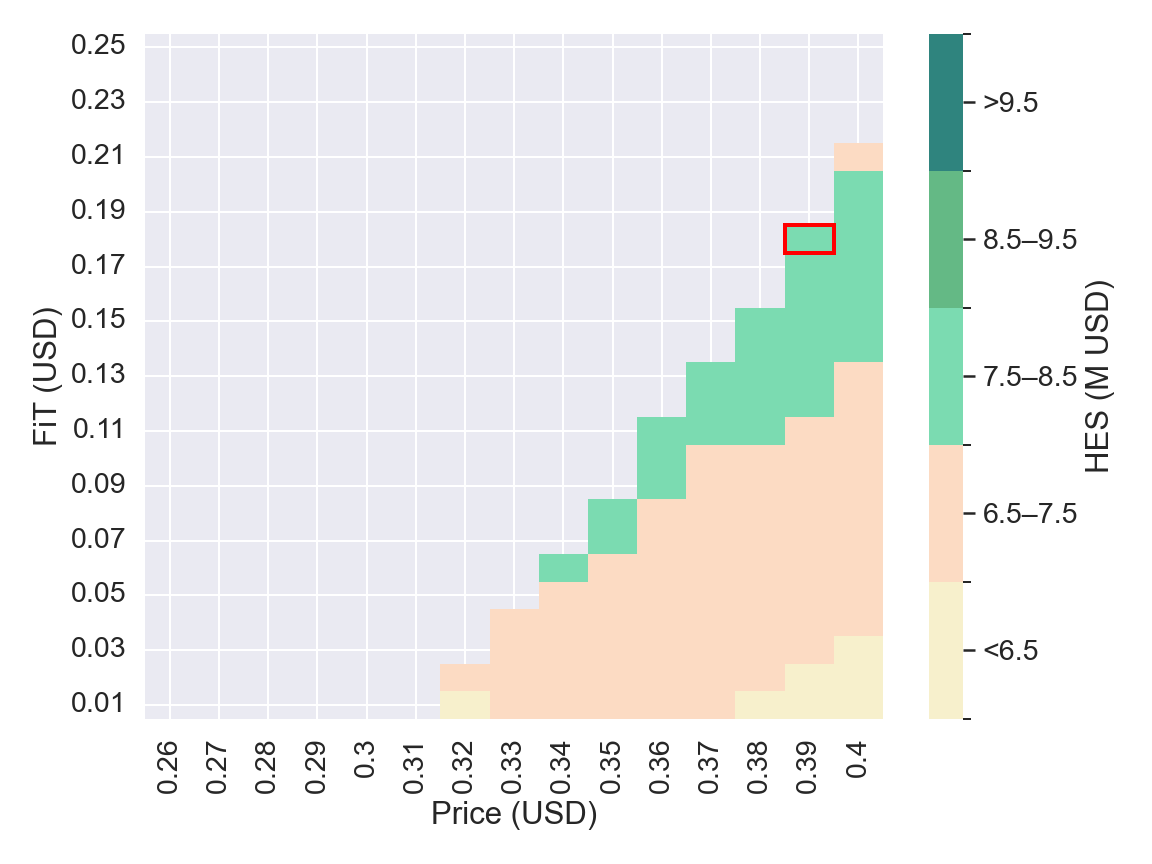}
        \caption{Budget = \$250k }
        \label{fig:Fi (HS)}
    \end{subfigure}
    \begin{subfigure}[b]{0.41\textwidth}
        \centering
        \includegraphics[width=\textwidth]{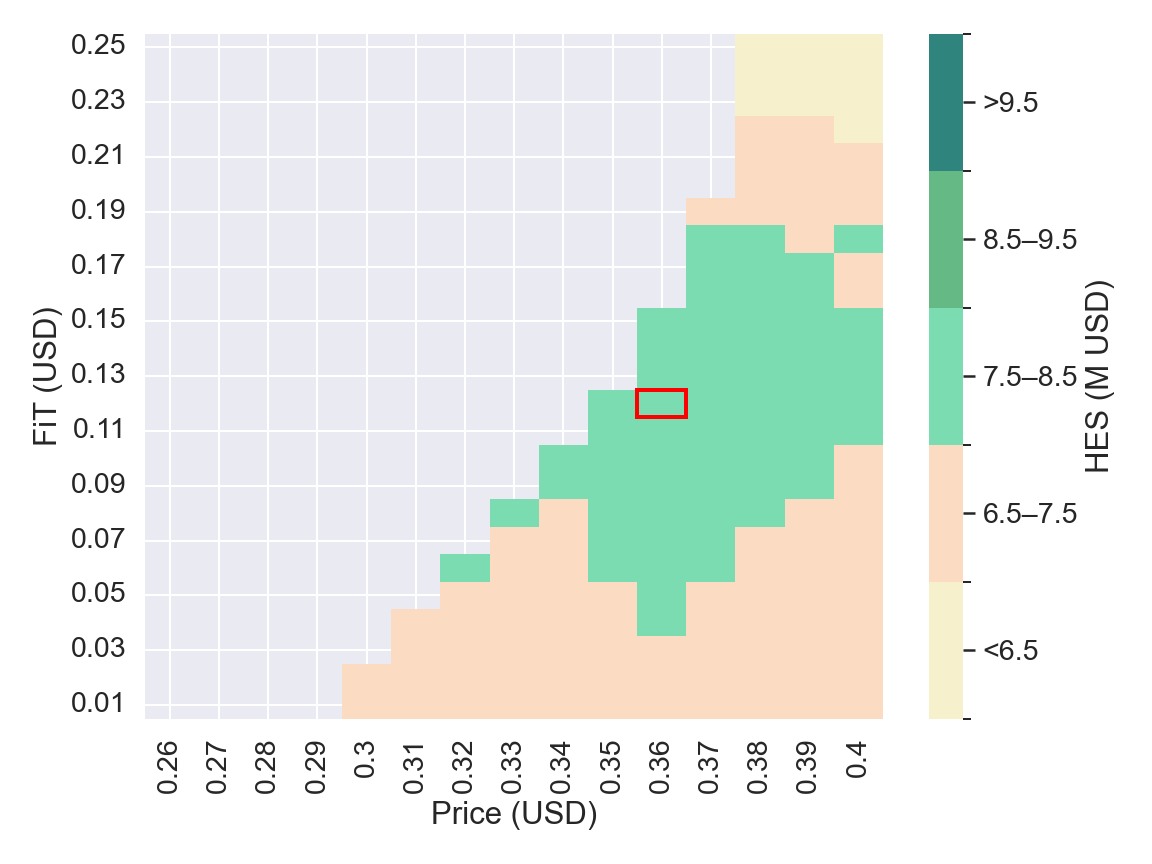}
        \caption{Budget = \$420k}
        \label{fig:Fi + Bat (HS)}
    \end{subfigure}
    \begin{subfigure}[b]{0.41\textwidth}
        \centering
        \includegraphics[width=\textwidth]{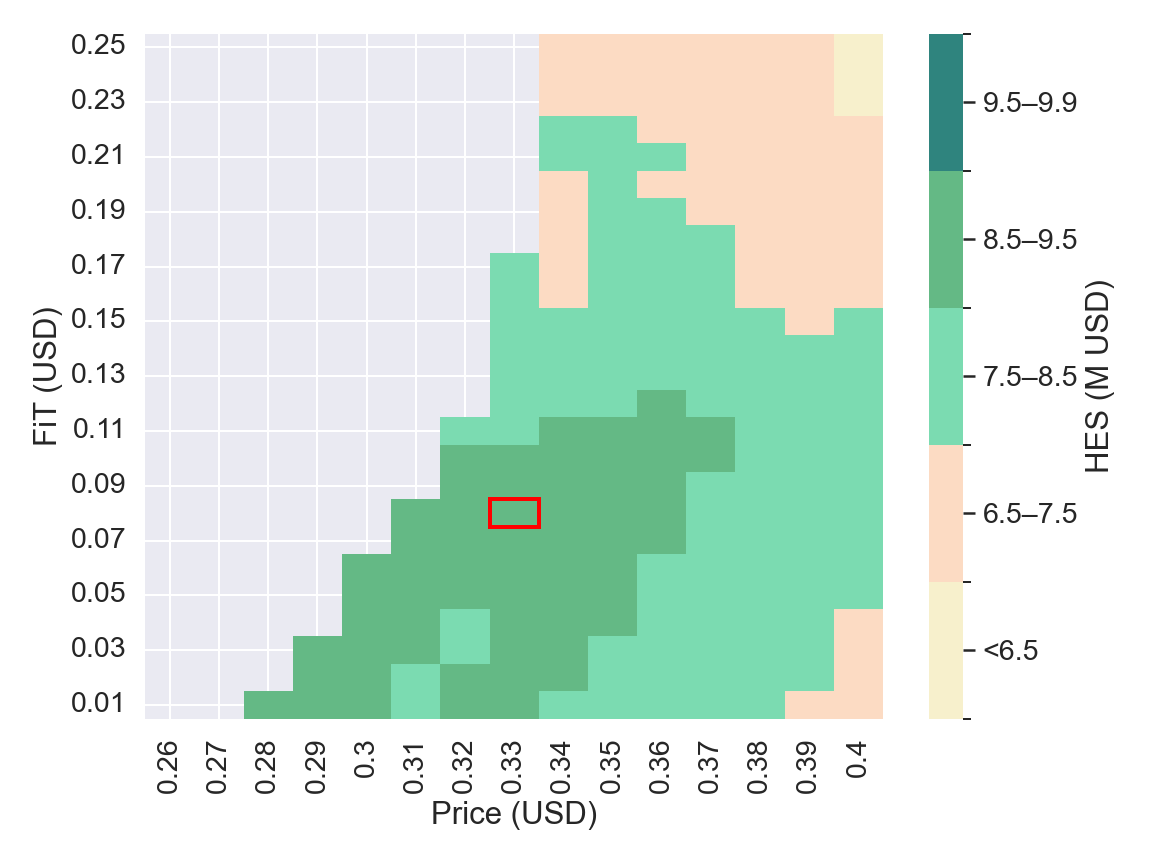}
        \caption{Budget = \$750k }
        \label{fig:Fi + Bat + PV (HS)}
    \end{subfigure}
    \begin{subfigure}[b]{0.41\textwidth}
        \centering
        \includegraphics[width=\textwidth]{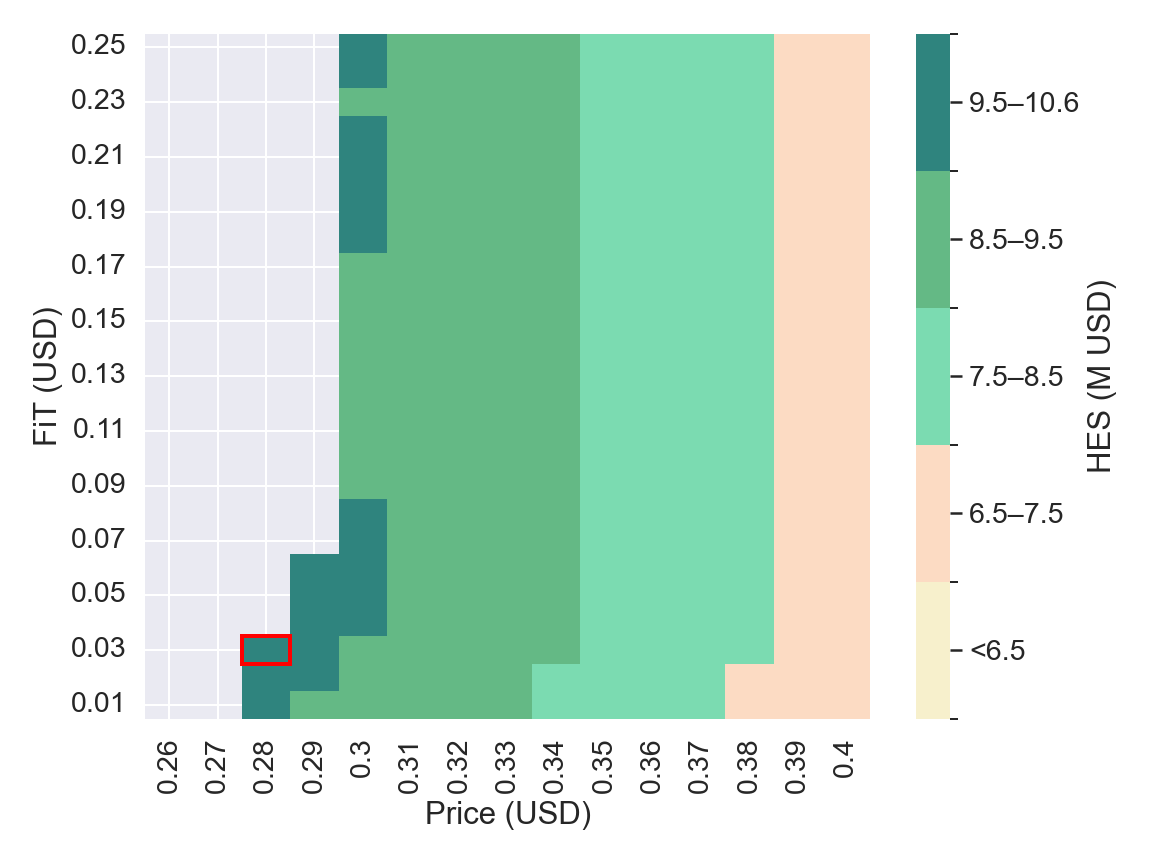}
        \caption{Budget = \$1.75M}
        \label{fig:Fi + Bat + PV (HS)}
    \end{subfigure}
    \caption{HES of feasible policies, with red boxes indicating the maximum}
    \label{fig:household surplus}
\end{figure}

Higher budgets generally yield higher HES (Figure \ref{fig:household surplus v budget}). Under the proposed model, the status-quo HES can be recovered for a budget lower than $\Pi_0$ by 58.3\%. Maintaining the benchmark budget $\Pi_0$, the HES would increase by 18\% with the proposed model. The HES increase is not limited to the optimal policies; the darker regions in Figure \ref{fig:household surplus} show that other $(P, FiT)$ tuples generally benefit from budget increases as the addition of capital-intensive PV-battery capacity results in more served demand or more fed-in electricity from household PV-owner excess.

\begin{figure}
    \centering
    \includegraphics[width=0.5\linewidth]{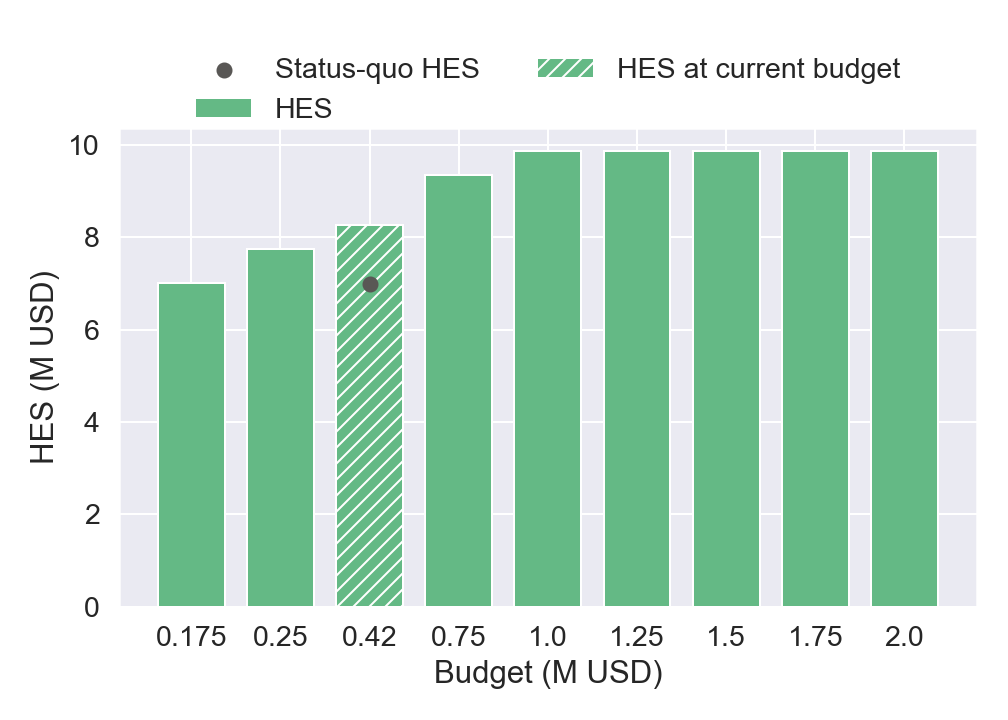}
    \caption{Optimal HES at different budgets}
    \label{fig:household surplus v budget}
\end{figure}

For every optimal HES, at each budget, the four values driving the regulator's objective function: $P$, $FiT$, served demand, and household PV-owner excess utilization can be studied more closely.
Figure \ref{fig:P-FiT v budget} shows the optimal prices and feed-in-tariffs for every considered case. As the budget increases, the DGC can commit higher investments to replace diesel-based capacity characterized by a high LCOE with more economical PV-battery systems, reducing the overall costs and allowing for a price decrease, in turn increasing the HES. The $FiT$ decreases as well to remain competitive with the increasing PV-battery capacity allowed by higher budgets.

\begin{figure}
    \centering
    \includegraphics[width=0.5\linewidth]{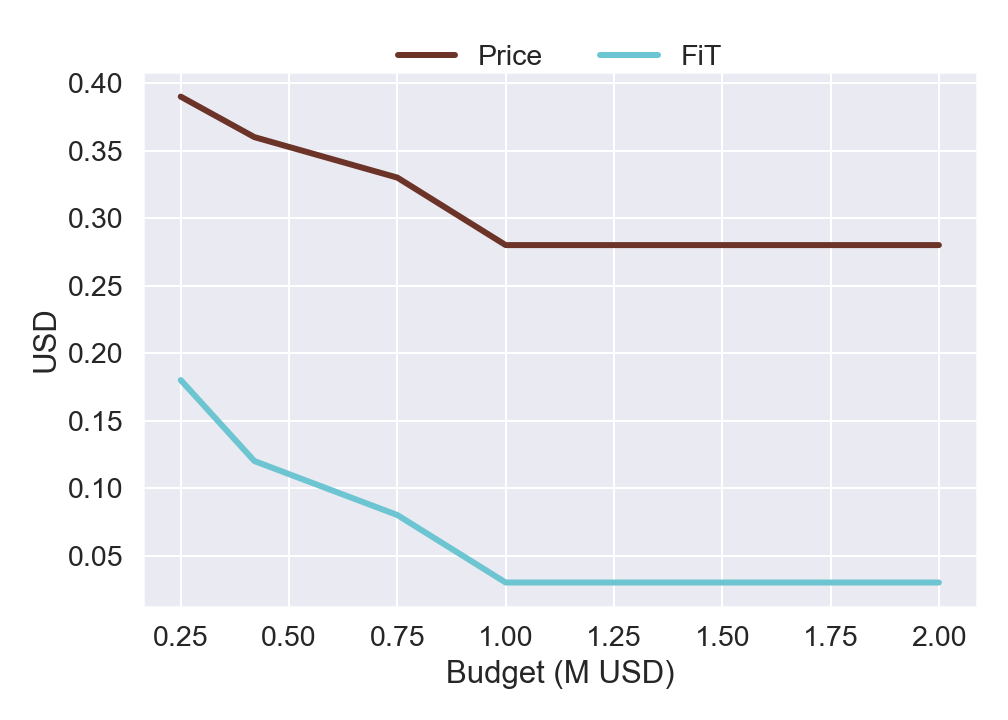}
    \caption{Optimal price and feed-in-tariff at different budgets}
    \label{fig:P-FiT v budget}
\end{figure}

These findings are further reflected in Figure \ref{fig:baseline cap}. With tight budgets, additional diesel capacity is needed. Batteries are first introduced to increase the utilization of the household PV-owner excess available for feed-in. When the budget reaches 0.75 M USD or more, sizable PV investments are undertaken. At budgets of 1 M USD and above, diesel generator capacities are entirely phased out following the retirement of existing units. Figure \ref{fig:baseline ene} highlights the considerable share of total generation that the fed-in electricity constitutes. At a budget of 0.75 M USD, the large installed capacity of battery storage results in maximizing the contribution of fed-in electricity to the microgrid demand to 39\% and in minimizing the wasted household PV excess generation potential to 5\%. For higher budgets, the increased PV capacity then reduces the utilization of fed-in electricity, resulting in wasted household PV excess generation potential to reach 10\%. For the tightest budget, the unmet demand share reaches 15\%. As the budget is further relaxed, this share drops to a stable 4\%.

\begin{figure}
\centering
    \begin{subfigure}[b]{0.49\textwidth}
        \centering
        \includegraphics[width=\textwidth]{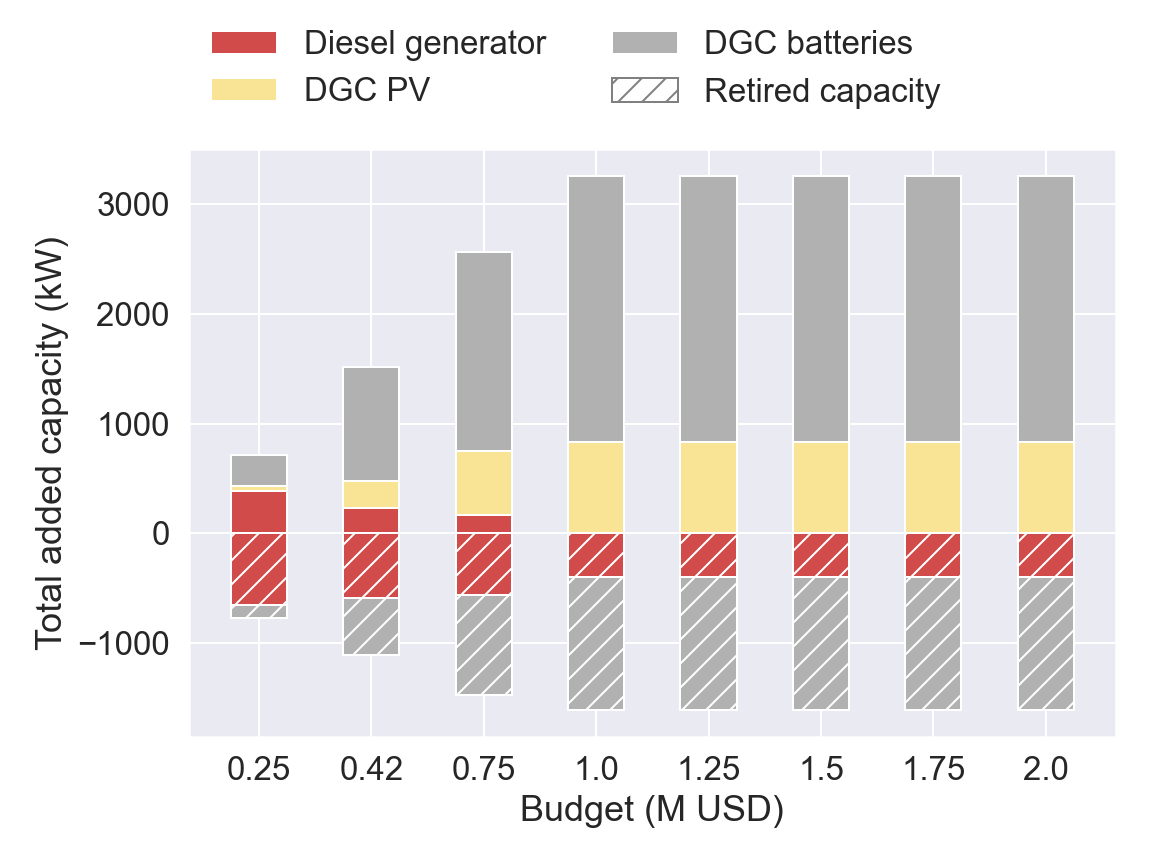}
        \caption{Total added/retired capacity}
        \label{fig:baseline cap}
    \end{subfigure}
    \begin{subfigure}[b]{0.49\textwidth}
        \centering
        \includegraphics[width=\textwidth]{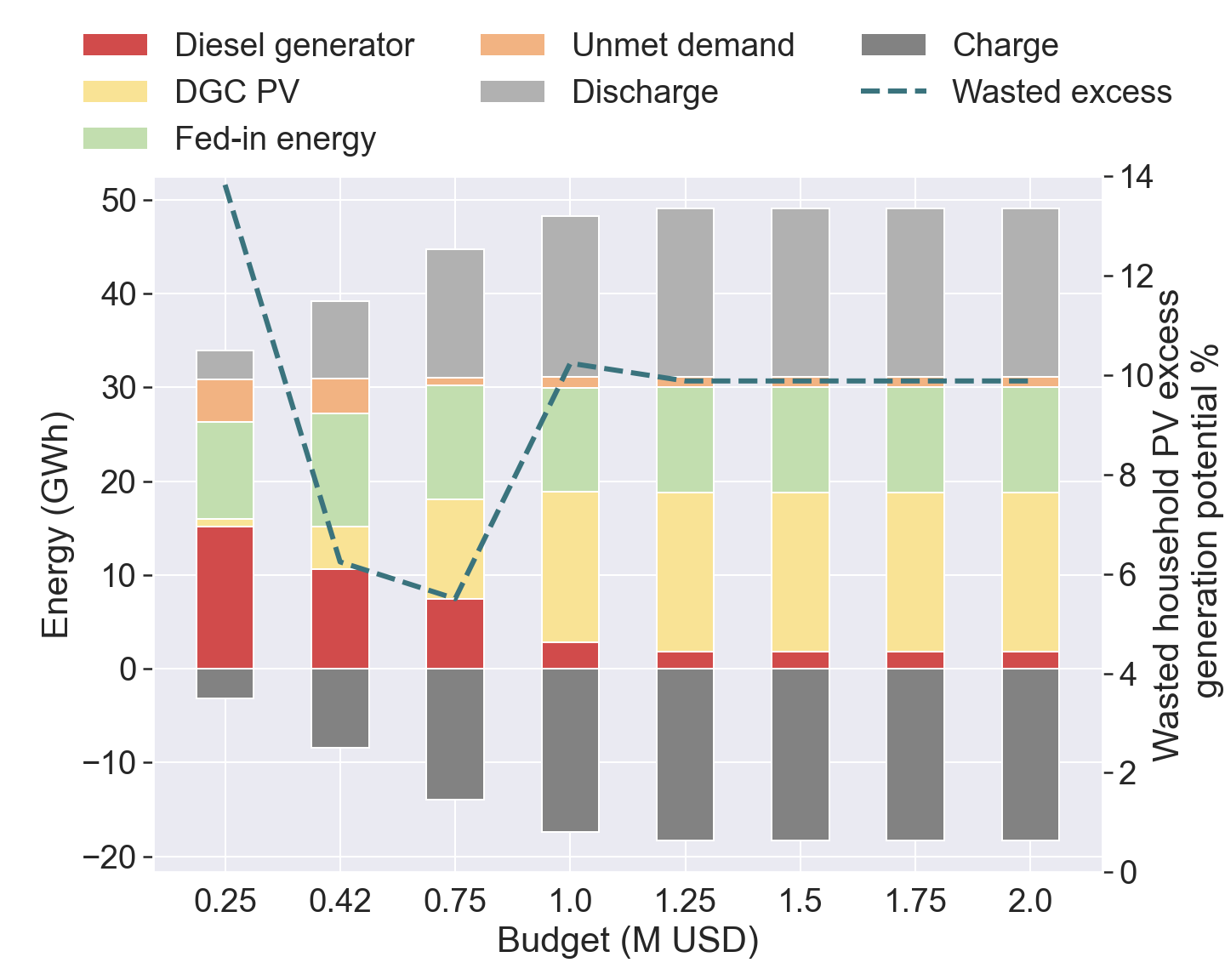}
        \caption{Total energy generation}
        \label{fig:baseline ene}
    \end{subfigure}
    \caption{Total added capacities and total energy for different budgets}
    \label{fig:cap and ene}
\end{figure}

\subsection{Sensitivity on number of household PV-owners}\label{sec:pros sens}
PV-owner participation is a key factor for HES improvement. It is therefore imperative to understand the effects of varying the share of household PV-owners in the microgrid on the HES. For every household PV-owner penetration case, corresponding benchmark NPV and budget are computed. The change in HES between each microgrid's status-quo and the corresponding proposed model is illustrated in Figure \ref{fig:change_pros}. At 0\% household PV-owners, introducing a PV capacity upon retirement of the diesel generator lowers operating costs and increases the DGC’s NPV, even though unmet demand rises (Figure \ref{fig:cap and ene @ pros}). Coupled with the absence of fed-in electricity, this results in a decrease in the HES from the status quo. With higher shares of PV-owners, this undesired outcome is mitigated. The HES increases with the increasing penetration of PV-owners until being maximized at a 90\% penetration rate. Understandably, the percentage of wasted household PV generation potential increases with the share of PV-owners, due to both the decrease in net demand on the microgrid, and the increase in the total generation capacity of households (Figure \ref{fig:baseline ene @ pros}). It is also worth noting that, as the PV-owner share reaches 75\% or higher, the DGC's operations are almost completely focused on battery storage.

\begin{figure}
\centering
        \includegraphics[width=0.5\textwidth]{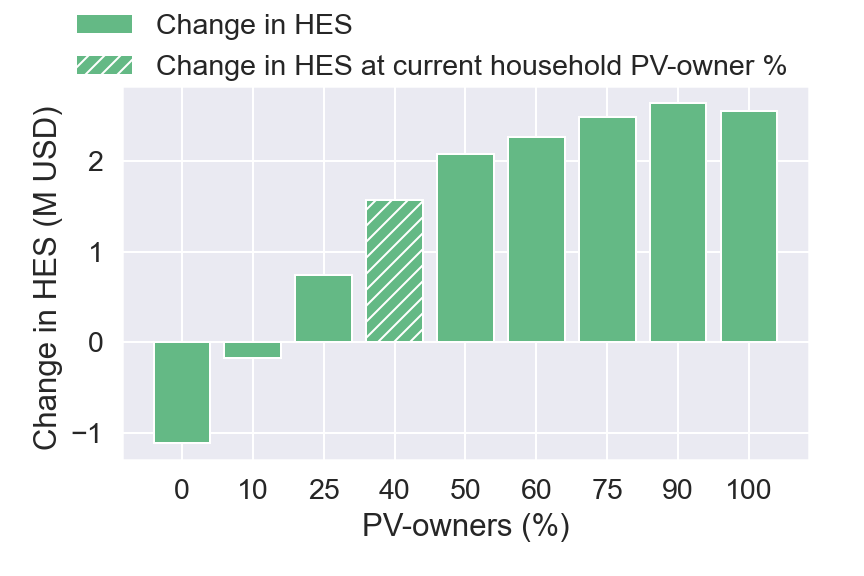}
        \caption{Difference in HES between the proposed model and status-quo}
        \label{fig:change_pros}
\end{figure}

\begin{figure}
\centering
    \begin{subfigure}[b]{0.49\textwidth}
        \centering
        \includegraphics[width=\textwidth]{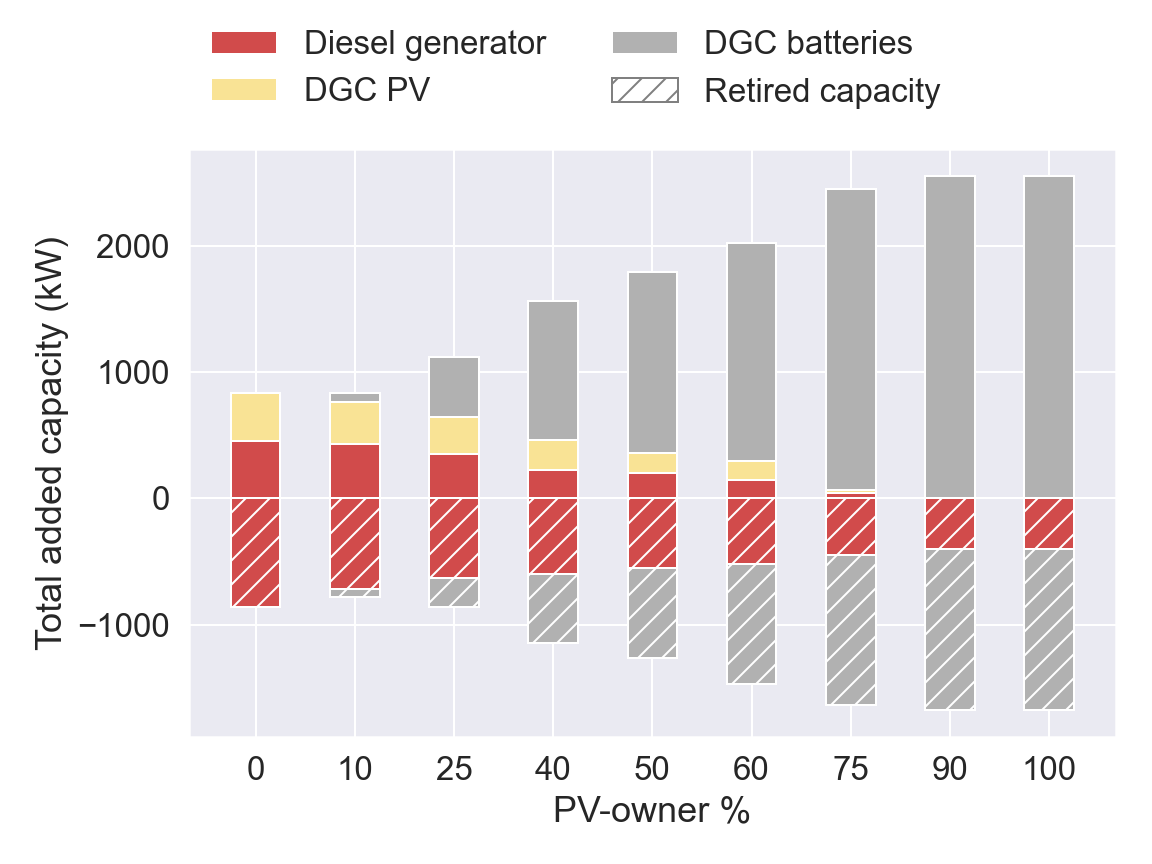}
        \caption{Total added/retired capacity}
        \label{fig:baseline cap @ pros}
    \end{subfigure}
    \begin{subfigure}[b]{0.49\textwidth}
        \centering
        \includegraphics[width=\textwidth]{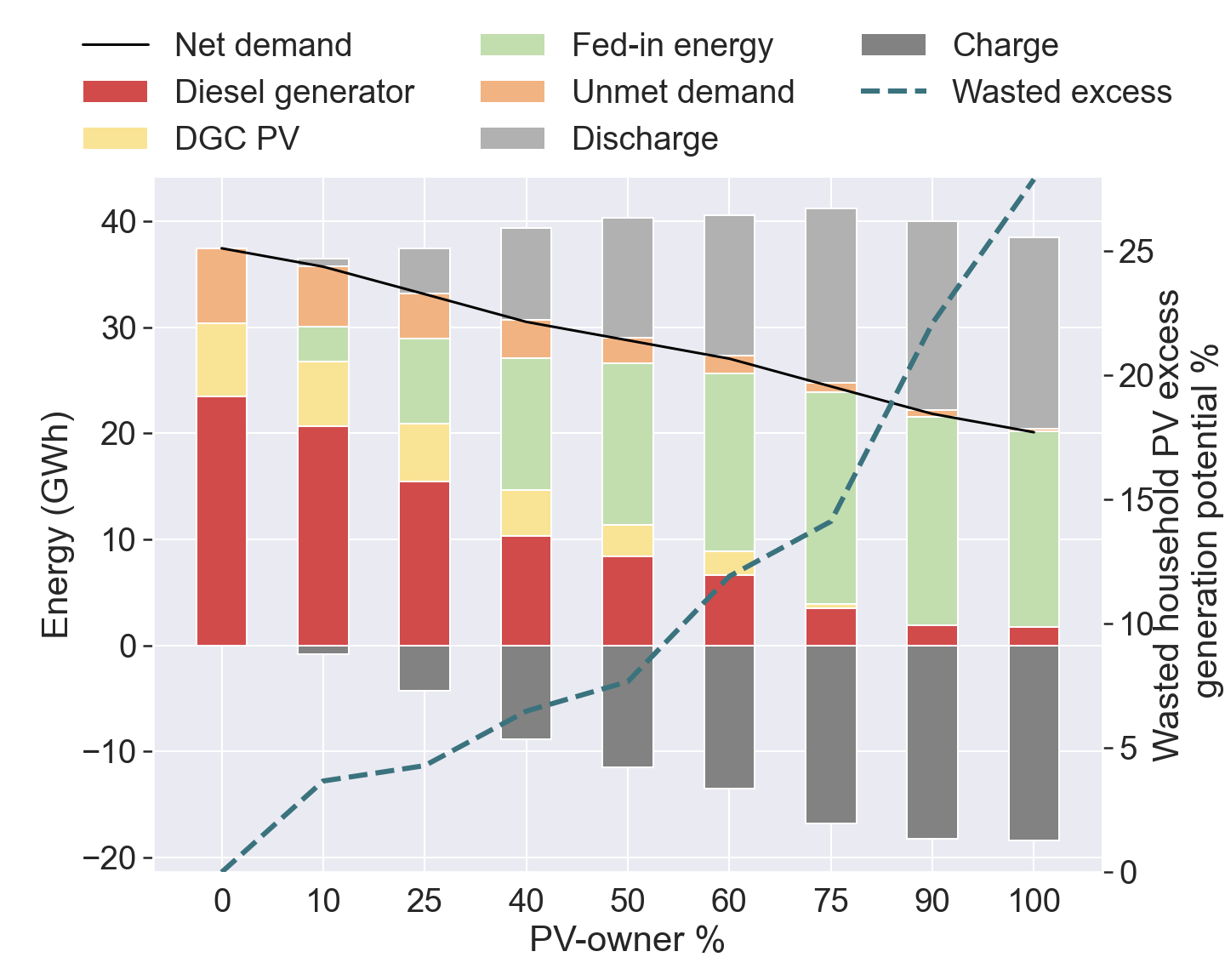}
        \caption{Total energy generation}
        \label{fig:baseline ene @ pros}
    \end{subfigure}
    \caption{Added capacities and energy for different shares of PV-owners}
    \label{fig:cap and ene @ pros}
\end{figure}

\subsection{Constraining the unmet demand} \label{sec:ud cons}
As shown in previous sections, the proposed microgrid model results in limited unmet demand except when the budget is very tight or when the share of PV-owners in the microgrid is low. To further shed light on this important measure, Figure \ref{fig:UD} compares the unmet demand of the proposed model to that of the status quo for different budgets (Figure \ref{fig:UD budget}) and shares of PV-owners in the microgrid (Figure \ref{fig:UD pros}).

It is noticeable that the proposed microgrid model results in higher unmet demand compared to the status quo for almost all the cases considered. 
Regulating the $(P^{max}, FiT^{min})$ tuple alone, while permitting a profit-maximizing DGC to access fed-in electricity from household PV-owners and invest in PV-battery systems, generally results in a low unmet demand but is insufficient to keep it at levels achievable in diesel-only microgrids. This observation reflects the regulator’s limited oversight over supply quality in the microgrid, which cannot be addressed through price caps alone. 

\begin{figure}
\centering
    \begin{subfigure}[b]{0.49\textwidth}
        \centering
        \includegraphics[width=\linewidth]{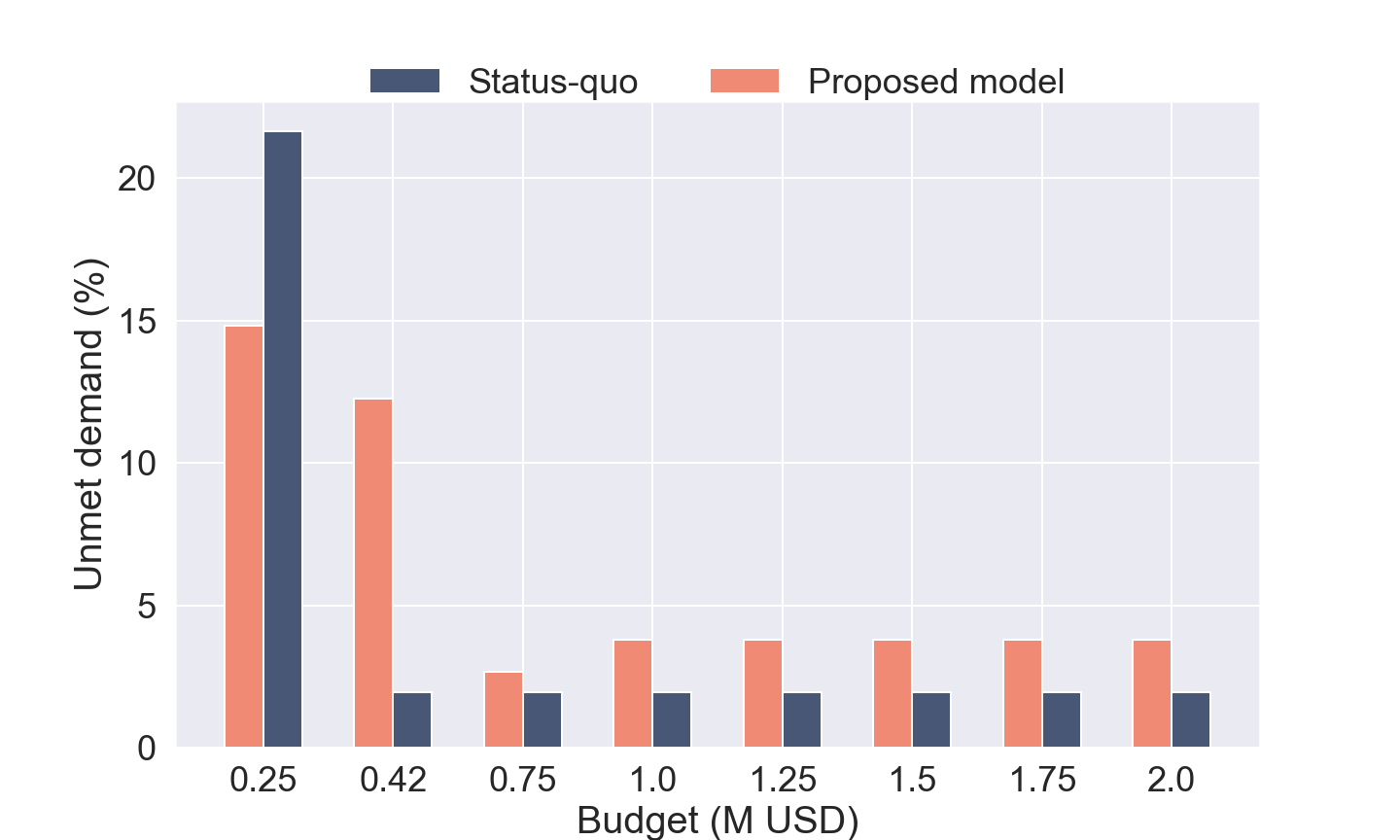}
    \caption{Unmet demand under different budgets}
    \label{fig:UD budget}
    \end{subfigure}
    \begin{subfigure}[b]{0.49\textwidth}
        \centering
        \includegraphics[width=\linewidth]{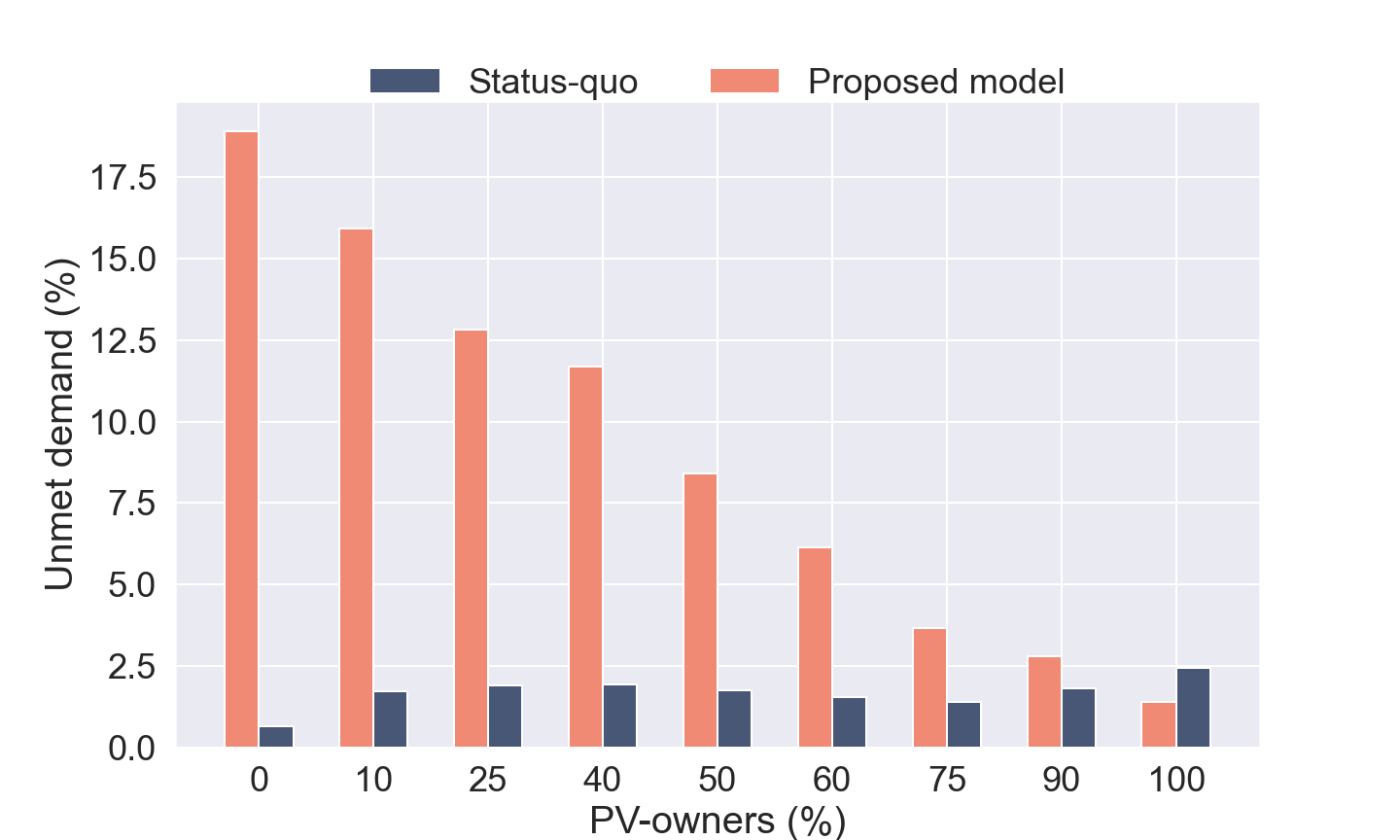}
    \caption{Unmet demand under different PV-owner shares}
    \label{fig:UD pros}
    \end{subfigure}
    \caption{Unmet demand for proposed model and status quo}
    \label{fig:UD}
\end{figure}

To quantify the effects of the limited regulatory oversight over the quality of supply, and therefore the HES, we introduce a theoretical regulator with the extended power to impose, in addition to $(P^{max}, FiT^{min})$, the status-quo unmet demand levels. Let  $\chi_0$ represent the unmet demand in the status-quo microgrid. The following constraint guarantees that the unmet demand in the proposed model does not exceed $\chi_0$ (\ref{eq:ud limit}):
\begin{equation} \label{eq:ud limit}
    \sum_y \sum_d \omega_d \sum_h U_{y,d,h} \leq \chi_0
\end{equation}

Figure \ref{fig:HES cons comp} compares the HES for three cases: (i) the status-quo, (ii) the proposed model where the regulator only controls the tuple $(P^{max}, FiT^{min})$, and (iii) the proposed model with extended regulation imposing Equation \ref{eq:ud limit} under different budgets and shares of PV-owners in the microgrid. The following observations can be made. As revealed in previous results, the proposed model strongly outperforms the status-quo except for cases where the PV-owner share in the microgrid is 10\% or lower. The extended regulation is always performing best. The difference between limited and extended regulations is less than 5\% for budgets at or above 0.75 M USD and less than 7.6\% for shares of PV-owners at or above 40\%. This highlights that, despite the limitations of only relying on price caps in achieving lower unmet demand levels, the loss in HES in the proposed model compared to a regulator with the power to impose a certain quality of supply is small.

\begin{figure}
\centering
    \begin{subfigure}[b]{0.495\textwidth}
        \centering
        \includegraphics[width=\linewidth]{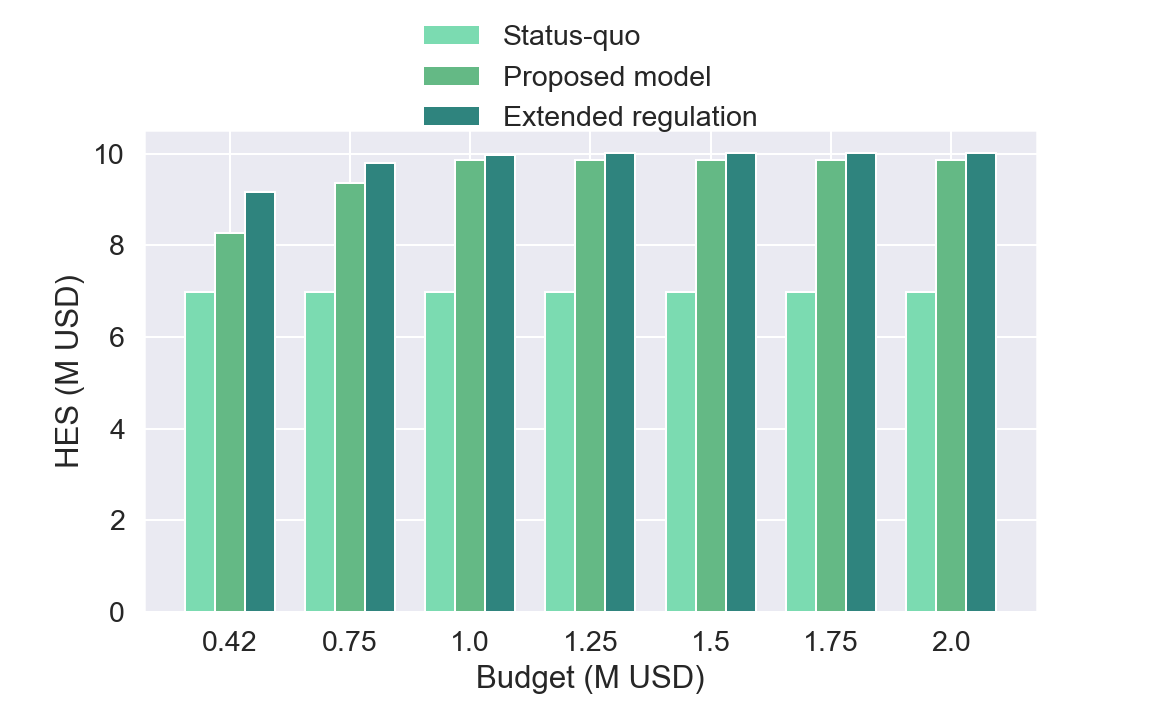}
    \caption{HES under different budgets}
    \label{fig:HES cons comp budget}
    \end{subfigure}
    \begin{subfigure}[b]{0.495\textwidth}
        \centering
        \includegraphics[width=\linewidth]{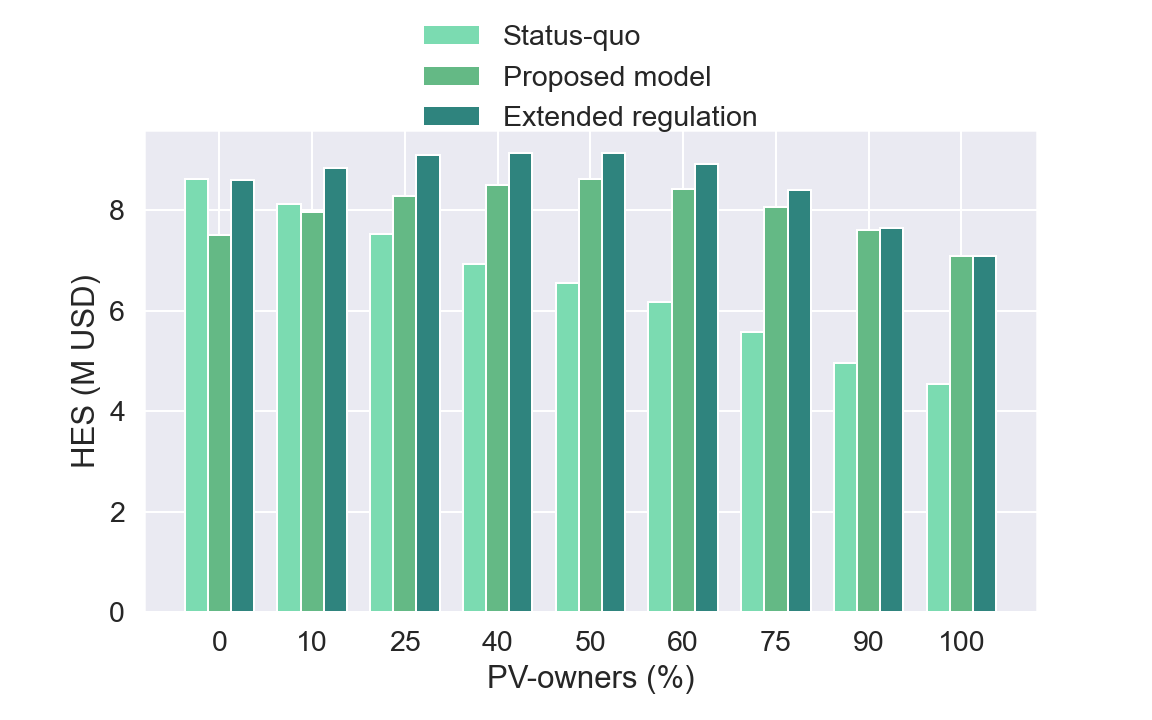}
    \caption{HES under different PV-owner shares}
    \label{fig:HES cons comp pros}
    \end{subfigure}
    \caption{HES for different microgrid cases}
    \label{fig:HES cons comp}
\end{figure}

\section{Conclusion}
\label{sec:conclusion}
In many grid-constrained countries affected by political and economic instability, central governments often fail to meet society’s electricity needs. In response, neighborhood diesel generators have emerged to fill this gap by forming informal, diesel-based microgrids that supply power to nearby households and institutions. With the increasing adoption of household-level photovoltaic (PV) systems, new questions arise regarding the optimal utilization of existing assets to form microgrids capable of delivering more affordable and renewable electricity. Addressing these questions requires not only a techno-economic perspective but also an examination of the market power exercised by incumbent actors, specifically diesel generator companies (DGC), within environments characterized by limited regulatory oversight.

In this paper, two types of contributions are made: 
\begin{itemize}
\item In terms of energy system modeling, we propose a bi-level game theoretical model representing a regulator with limited oversight at the first-level, and a monopolizing diesel generator company at the second. The regulator decides on the maximum electricity price and minimum feed-in-tariff in the microgrid in order to maximize the household economic surplus (HES). The regulator promotes the integration of renewable energy in the microgrid through allowing household PV-owner feed-in and incentivizing DGC-owned PV and battery installation, while accounting for the economic interests of the DGC reflecting its market power. The DGC controls access to the microgrid, and decides on capacity additions, dispatch, and the amount of fed-in capacity from household PV-owners.

\item In terms of policy insights, the model is applied to the case of a real microgrid in a village in Lebanon, relying on local data collected over six months. The following main conclusions are reached: 

Under DGC market power, microgrid-level price and feed-in-tariff caps are generally successful in considerably increasing the HES. The policies that maximize the HES always entail a large portion of household PV-owner fed-in electricity into the microgrid satisfying up to 39\% of the demand in the base case. The introduction of feed-in-tariff regulation and renewable energy in the microgrid yields a significant improvement in HES, particularly through improved electricity affordability and reduced PV-owner generation waste. The proposed model substantially decreases the reliance on diesel-based generation as well. The transition to cheaper, sustainable renewable energy holds both an economic and an environmental importance, especially in politico-economically challenged contexts.

Increasing the DGC's budget leads to substantial gains in HES compared to the status quo. The HES improvements amounting to 18\% at the base budget increase to 41\% under a non-binding budget.

When the household PV-owners' penetration in the microgrid exceeds 10\%, considerable HES improvements compared to a diesel-based microgrid are observed. The improvements peak at a penetration level of around 90\%, where the household PV-owners' generation surplus allows for cheaper electricity, and the demand served by the microgrid is still significant.

The renewable energy penetration in the microgrid reaches 60\% at the base case conditions compared to 0\% in the status quo diesel-based microgrid. This penetration rate approaches 100\% for budgets starting 1 million USD or for PV-owning households constituting 75\%  of all households in the microgrid.

The limited regulator's control over unmet demand entails measurable HES losses under tight budget conditions and low household PV-owner shares in the microgrid. At the status-quo budget and PV-owner share, extending price-cap regulation to also control unmet demand leads to a 10\% increase in HES. In practice, if the regulator cannot directly control unmet demand, it can instead influence system outcomes through targeted state incentives. Conditional low-interest loans for DGCs, linked to renewable energy investments, effectively increase their available investment budget and reduce unmet demand. Likewise, subsidized low-interest loans for household rooftop PV adoption increase PV penetration within the microgrid, which further contributes to lowering unmet demand.

\end{itemize}

This study assumes a microgrid that operates independently of the national grid. Future research could investigate cases of partial interconnection with the national grid, as well as the inclusion of time-varying electricity prices and feed-in-tariffs over the planning horizon. Additionally, incorporating uncertainty and dynamic decision-making throughout the planning horizon represents an interesting direction for future research. This could involve extending the proposed framework to a stochastic or multi-stage bi-level optimization model, or alternatively to a dynamic game in which the regulator and the DGC sequentially update their decisions as uncertainties, such as technology costs and fuel prices, are progressively resolved.

\section{Acknowledgments}
We thank Prof. Georges Zaccour and Prof. Anne Neumann for providing comments and review. This publication is based on research supported by the Templeton World Charity Foundation, Inc. 
(funder DOI 501100011730) under the grant
\url{https://doi.org/10.54224/32645}.

\section{Declaration of generative AI and AI-assisted technologies in the manuscript preparation process}
During the preparation of this work the authors used OpenAI's ChatGPT to assist with language editing and improving overall readability. After using this tool, the authors reviewed and edited the content as needed and take full responsibility for the content of the published article.

\newpage
\bibliographystyle{unsrturl}
\bibliography{bibliography}

\newpage
\appendix
\section{Nomenclature} \label{nomenclature}
The parameters and sets used in the description of the bi-level framework are shown in this appendix.
\renewcommand{\thetable}{A.\arabic{table}}
\setcounter{table}{0}
\begin{table}[h!]
    \centering
    \small
    \caption{Parameters\\
        $^*$ A negative surplus indicates a net demand}
    \resizebox{\textwidth}{!}{
    \renewcommand{\arraystretch}{1}
    
    \begin{tabular}{l p{340pt} l}
        \hline
        Symbol & Input name & Unit\\
        \hline
        $\alpha$ & Minimum state of charge &\\
        $\beta$& Binary input allowing the use of renewable sources & binary\\
        $\gamma^{DGC}$ & Discount rate of the DGC&\\
        $\gamma^{RE}$ & Discount rate of the regulatory entity&\\
         $\varepsilon$ & Unit penalty of unmet demand& USD/kWh\\
         $\eta$ & Charging and discharging efficiency &\\
         $\theta^{PV}_i$ & Average PV capacity of households& kW\\ 
         $\kappa_{g}$ & Initial installed DGC capacity & kW\\
         $\lambda^{C}_{g, y}$ & Unit capital cost of technology& USD/kW\\
         $\lambda^{OF}_{g}$ & Unit fixed operation cost of technology & USD/kW/$y$\\
         $\lambda^{OV}_{g}$ & Unit variable operation cost of technology & USD/kWh\\
         $\mu_{i,y,d,h}$ & Total electricity demand of households & kW\\
         $\nu_g$ & Lifetime of technology & years\\
         $\nu^0_g$ & Remaining lifetime of installed technology at year 0& years\\
         $\xi$ & Value of lost load & USD/kWh\\
         $\Pi$& Budget available for installing new capacity & USD\\
         $\pi$ & Price of diesel & USD/L\\
         $\rho_j$ & Heat rate of the diesel generator on the $j^{th}$ portion of the heat rate curve & L/kWh\\
         $\sigma_{i,y,d,h}$ & Surplus electricity per household $^*$ & kW\\
         $\tau_y$ & Maximum electricity price set by the ministry& USD/kWh\\
         $\upsilon$ & Minimum required level of satisfied demand & \%\\
         $\Upsilon$ & Planning horizon &\\
         $\phi_{y,d,h}$ & Capacity factor of PV &\\
         $\Omega_{i,y}$ & Total number of households available to the microgrid &\\
         $\omega_d$ & Weight of representative day $d$  &\\
        \hline
        \end{tabular}
        }
        
        \label{tab:In}
\end{table} 
\FloatBarrier
\begin{table}[h!]
        \centering
        \small
        \caption{Derived sets}
        \resizebox{\textwidth}{!}{
        \renewcommand{\arraystretch}{1}
        \begin{tabular}{l p{400pt}}
        \hline
         Symbol    & Set name \\
        \hline
        $\mathcal{D}$ & Representative periods (days) within a year ($\left\{0,1,2\right\}$)\\
        $\mathcal{G}$     & All technologies (diesel generator $DG$, DGC photovoltaic cells $PV$, DGC batteries $B$)\\
        $\mathcal{G}g$ & Non-storage technologies ($DG$, $PV$)\\
         $\mathcal{H}$ & Hours in a representative period ($\left\{0,...,23\right\}$)\\
         $\mathcal{I}$ & Household types (Non-PV-owner household, PV-owner household)\\
         $\mathcal{R}$ & Renewable sources ($PV$, fed-in PV-owner surplus $Fi$)\\
         $\mathcal{S}$ & All decision variables\\
         $\mathcal{Y}$ & Years ($\left\{0,...,14\right\}$)\\
        \hline
        \end{tabular}
        }
        \label{tab:sets}
    \end{table}

\section{Second-level constraints} \label{constraints}
\subsection{Capacity and retirement of technology constraints} \label{gen tech}
Equation \ref{inst cap} keeps track of the installed capacity of the technologies $C_{g,y}$, as a function of the previously installed $C_{g,y-1}$, the added $A_{g,y}$, and the retired capacities $Ret_{g,y}$, where the initial capacities are detailed in \ref{init cap}.
    \begin{equation}\label{inst cap}
        C_{g,y}=C_{g, y-1} + A_{g,y} - Ret_{g,y} \quad\forall g, y-1\geq0, \tag{B.1}
    \end{equation}
    \begin{equation}\label{init cap}
        C_{g,0}=\kappa_{g} + A_{g,0} - Ret_{g,0}\quad\forall g \tag{B.2}
    \end{equation}
The retirement of the DGC's capacities is modeled in Equations \ref{ret init} through \ref{ret inst}. When the remaining life of the initial capacity ends, the retired capacity is equal to the initial capacity (\ref{ret init}). Before that, no capacity gets retired (\ref{no ret}). After the initial capacity is retired, the capacity of technology $g$ added $\nu^c_g$ years ago (i.e.: a lifetime ago) is retired (\ref{ret inst}).
    \begin{align}
        Ret_{g,y}&=\kappa_{g} \quad\forall g, y=\nu^0_g \label{ret init}\tag{B.3}\\
        Ret_{g,y}&=0\quad\forall g, y<\nu^0_g \label{no ret}\tag{B.4}\\
        Ret_{g,y}&=A_{g,y-\nu_g}\quad\forall, y>\nu_g \label{ret inst}\tag{B.5}
    \end{align}
    
\subsection{Dispatch constraints} \label{gen disp}
    Equations \ref{DG disp} through \ref{bat in} ensure that the dispatch $D_{g,y,d,h}$ from neither the diesel generator, the DGC's PV nor the batteries charging or discharging exceed their respective capacities.
    \begin{equation}\label{DG disp}
        D_{g=DG,y,d,h}\leq C_{g=DG,y} \quad\forall y,d,h \tag{B.6}
    \end{equation}
    \begin{equation}\label{PV disp}
        D_{g=PV,y,d,h}\leq \phi_{y,d,h} \times C_{g=PV}\;\quad\forall\;y,d,h \tag{B.7}
    \end{equation}
    \begin{equation}\label{bat out}
        B^-_{y,d,h}\leq C_{g=B,y}\quad\forall y,d,h \tag{B.8}
    \end{equation}
    \begin{equation}\label{bat in}
        B^+_{y,d,h}\leq C_{g=B,y} \quad\forall y,d,h \tag{B.9}
    \end{equation}

\subsection{Storage technology constraints} \label{storage tech}
 Equation \ref{SoC} follows the hourly state of charge of the batteries as a function of the previous state of charge, charge, discharge, and efficiency $\eta$, while the initial condition on the state of charge is set in \ref{SoC rep}. Equation \ref{SoC bound} ensures that the state of charge never drops below a given minimum (as a percentage of the energy capacity), and doesn't exceed the energy capacity.
    \begin{equation}\label{SoC}
        SoC_{y,d,h}=SoC_{y,d,h-1}+\eta\times B^+_{y,d,h}- B^-_{y,d,h}/\eta\quad\forall y,d,h-1\geq0, \tag{B.10}
    \end{equation}
    \begin{equation}\label{SoC}
        SoC_{y,d,0}=SoC^0_{y,d}+\eta\times B^+_{y,d,0}- B^-_{y,d,0}/\eta\quad\forall y,d,h-1\geq0, \tag{B.11}
    \end{equation}
    \begin{equation}\label{SoC rep}
        SoC^0_{y,d}=SoC_{y,d,23}+\eta\times B^+_{y,d,23}- B^-_{y,d,23}/\eta\quad\forall y,d \tag{B.12}
    \end{equation}
    \begin{equation}\label{SoC bound}
         \alpha \times 4h \times C_{g=B, y} \leq SoC_{y,d,h} \leq 4h \times C_{g=B, y}\quad\forall y,d,h \tag{B.13}
    \end{equation}

\section{Proof of Lemma 1} \label{lemma proof}
Assuming $NPV^{DGC}(\mathcal{S}^\prime, P, FiT)$ is the NPV for any fixed set of variables $\mathcal{S}^\prime=\mathcal{S}-\{P, FiT\}$, price $P$ and feed-in-tariff $FiT$, and $NPV^{DGC*}(\mathcal{S}^{\prime *}, P, FiT)$ the NPV for the optimal set of variables $\mathcal{S}^{\prime *}=\mathcal{S^*}-\{P, FiT\}$, price $P$ and feed-in-tariff $FiT$.\\ 
With changes of price and feed-in-tariff having no effect on the model's feasibility and parameters, \textit{i.e.,} $P$ and $FiT$ are independent of all constraints, and no input is a function of $P$ and $FiT$, we first prove that the NPV is non-decreasing in $P$.\\
\begin{equation}
    \frac{\partial NPV^{DGC}}{\partial P} = \sum_y \sum_d \omega_d \sum_h \left(B^-_{y,d,h} - B^+_{y,d,h} + \sum_{g \in \mathcal{G}g}D_{g,y,d,h} + \sum_{i\in\mathcal{I}} Fi_{i,y,d,h} \right)\left(\frac{1}{(1+\gamma^{DGC})^y}\right) \tag{C.1}
\end{equation}
Recalling Equation \ref{sup-dem}, where $\forall y,d,h$:
\begin{equation*}
    U_{y,d,h} + B^-_{y,d,h} + \sum_{g\in\mathcal{G}g} D_{g,y,d,h} + \sum_{i\in\mathcal{I}} Fi_{i,y,d,h} =  B^+_{y,d,h} - \sum_{i\in\mathcal{I}}\min\left( 0,  \Omega_{i,y} \times \sigma_{i,y,d,h}\right)
\end{equation*}
\noindent Moreover, equation \ref{min UD h} states that that $\forall y,d,h$
\begin{equation*}
U_{y,d,h} \leq \sum_{i\in\mathcal{I}}-\min\left( 0,  \Omega_{i,y} \times \sigma_{i,y,d,h}\right) \tag{C.2} \label{B.3}
\end{equation*}
\noindent Joining Equations \ref{sup-dem} and \ref{min UD h} implies
\begin{align}
    &&B^-_{y,d,h} + \sum_{g \in \mathcal{G}g}D_{g,y,d,h} + \sum_{i\in\mathcal{I}} Fi_{i,y,d,h} \geq B^+_{y,d,h}&\quad\forall y,d,h \tag{C.3}\\
    &\Leftrightarrow&\sum_d \omega_d \sum_h \left(B^-_{y,d,h} - B^+_{y,d,h} + \sum_{g \in \mathcal{G}g}D_{g,y,d,h} + \sum_{i\in\mathcal{I}} Fi_{i,y,d,h} \right) \geq 0 &\quad\forall y\tag{C.4}\\
    &\Leftrightarrow&\frac{\partial NPV^{DGC}}{\partial P} \geq 0 \tag{C.5}\\
    &\Leftrightarrow& NPV^{DGC}(\mathcal{S}^\prime, P +\epsilon, FiT) \geq NPV^{DGC}(\mathcal{S}^\prime, P, FiT)&\quad\forall  P, FiT, \epsilon\geq0 \tag{C.6}
\end{align}
$\therefore$ $NPV^{DGC}(\mathcal{S}^\prime, P, FiT)$ is non-decreasing in $P$, implying that:
\begin{equation*}
    NPV^{DGC}(\mathcal{S}^{\prime *}, P +\epsilon, FiT) \geq NPV^{DGC*}(\mathcal{S}^{\prime *}, P, Fit) \quad\forall  P, FiT, \epsilon\geq0
\end{equation*}
and, with $NPV^{DGC*}(\mathcal{S}^{\prime **}, P +\epsilon, FiT)$ the maximum NPV, for a re-optimized  $\mathcal{S}^{\prime **}$, price $P +\epsilon, $ and feed-in-tariff $ FiT$, 
\begin{align*}
     && NPV^{DGC*}(\mathcal{S}^{\prime **}, P +\epsilon, FiT) \geq NPV^{DGC}(\mathcal{S}^{\prime *}, P+\epsilon, FiT)&\quad\forall  P, FiT, \epsilon\geq0 \\   
     &\Leftrightarrow& NPV^{DGC*}(\mathcal{S}^{\prime **}, P +\epsilon, FiT) \geq NPV^{DGC*} (\mathcal{S}^{\prime *}, P, FiT)&\quad\forall  P, FiT, \epsilon\geq0
\end{align*}
Then, we prove that $NPV^{DGC}$ is non-increasing in $FiT$. $\forall P, FiT$:
\begin{align}
    &&\frac{\partial NPV^{DGC}}{\partial FiT} = \sum_y\sum_d\omega_d\sum_h\left(-\sum_i Fi_{i,y,d,h}\right)\left(\frac{1}{(1+\gamma^{DGC})^y}\right)& \tag{C.8}\\
    &\Leftrightarrow&\frac{\partial NPV^{DGC}}{\partial FiT} \leq 0&\tag{C.9}\\
    &\Leftrightarrow&NPV^{DGC}(P, FiT+\epsilon) \leq NPV^{DGC}(P, FiT)&\quad\forall \epsilon\geq0\tag{C.10}
\end{align}
$\therefore$ $NPV^{DGC}(\mathcal{S}^\prime, P, FiT)$ is non-increasing in $FiT$, implying that:
\begin{equation*}
    NPV^{DGC*}(\mathcal{S}^{\prime *}, P, FiT +\epsilon) \leq NPV^{DGC}(\mathcal{S}^{\prime *}, P, FiT) \quad\forall  P, FiT, \epsilon\geq0 
\end{equation*}
and, with $NPV^{DGC*}(\mathcal{S}^{\prime **}, P, FiT+\epsilon)$ the maximum NPV, for a re-optimized  $\mathcal{S}^{\prime **}$, price $P, $ and feed-in-tariff $ FiT +\epsilon$, 
\begin{align*}
&&NPV^{DGC}(\mathcal{S}^{\prime *}, P, FiT+\epsilon) \leq NPV^{DGC*}(\mathcal{S}^{\prime **}, P, FiT+\epsilon)&\quad\forall  P, FiT, \epsilon\geq0\\
     &\Leftrightarrow& NPV^{DGC*}(\mathcal{S}^{\prime *}, P, FiT+\epsilon) \leq NPV^{DGC*}(\mathcal{S}^{\prime **}, P, FiT)&\quad\forall  P, FiT, \epsilon\geq0
\end{align*}
\qed

\section{Variable heat rate of diesel generator}\label{VHR}
As first denoted in constraints \ref{HR.1} through \ref{HR.3} model the variable heat rate of the diesel generator. Linearly, they translates to:
    
    \begin{align}
        D_{g=DG, y,d,h} &\leq  0.30 \times C_{g=DG,y} + (1 - b_{1,y,d,h})M \nonumber\\
        R_{y,d,h} &\geq \rho_1 \times D_{g=DG, y,d,h} - (1 - b_{1,y,d,h})M \nonumber\\
        R_{y,d,h} &\leq \rho_1 \times D_{g=DG, y,d,h} + (1 - b_{1,y,d,h})M \label{j=1}\tag{D.1}\\
        D_{g=DG, y,d,h} &\geq  0.30 \times C_{g=DG,y} + (1 - b_{2,y,d,h})M + \epsilon \nonumber\\
        D_{g=DG, y,d,h} &\leq  0.60 \times C_{g=DG,y} + (1 - b_{2,y,d,h})M \nonumber\\
        R_{y,d,h} &\geq \rho_2 \times D_{g=DG, y,d,h} - (1 - b_{2,y,d,h})M \nonumber\\
        R_{y,d,h} &\leq \rho_2 \times D_{g=DG, y,d,h} + (1 - b_{2,y,d,h})M \label{j=2}\tag{D.2}\\
        D_{g=DG, y,d,h} &\geq  0.60 \times C_{g=DG,y} + (b_{1,y,d,h}+b_{2,y,d,h})M + \epsilon \nonumber\\
        R_{y,d,h} &\geq \rho_3 \times D_{g=DG, y,d,h} - (b_{1,y,d,h}+b_{2,y,d,h})M \nonumber\\
        R_{y,d,h} &\leq \rho_3 \times D_{g=DG, y,d,h} + (b_{1,y,d,h}+b_{2,y,d,h})M \label{j=3}\tag{D.3}
    \end{align}

    where $M$ is a large enough number, and $\epsilon$ is a small enough number.\\
    In other words, we model a set of constraints \ref{j=1}, \ref{j=2} and \ref{j=3} for each of cases 1, 2 and 3 respectively, such that only one portion of the heat rate curved is enforced at each time step.

\section{Case study parameters} \label{app:case study}
The costs of the technologies used in the case study are summarized in Table \ref{tab:case study}, noting that the capital expenditure (Capex) of solar PV and batteries decrease along the planning horizon. The operating expenditures (Opex) remain constant.
\renewcommand{\thetable}{E.\arabic{table}}
\setcounter{table}{0}
\begin{table}[h!]
    \centering
    \caption{Case study parameters  \cite{olleik2026planning, ahmad2020distributed, hatton2024global}}
    \begin{tabular}{l l l}
        \hline
        Parameter & Value & Unit\\
        \hline
        PV Capex in year 0 & 964 & USD/kW\\
        PV fixed Opex & 22 & USD/kW/year\\
        PV variable Opex & 0 & USD/kWh\\
        Diesel generator Capex in year 0 & 800 & USD/kW\\
        Diesel generator fixed Opex & 84 & USD/kW/year\\
        Diesel generator non-fuel variable Opex & 0.014 & USD/kWh\\
        Diesel generator fuel-related variable Opex & 0.21 & USD/kWh\\
        Batteries Capex in year 0 & 335 & USD/kW\\
        Batteries fixed Opex & 61 & USD/kW/year\\
        Batteries variable Opex & 0.0006 & USD/kWh\\
        \hline
        \end{tabular}
        \label{tab:case study}
\end{table} 
\end{document}